\documentclass[11pt,final]{article}

\usepackage{color}
\usepackage[colorlinks=true, allcolors=blue]{hyperref}
\usepackage{amsmath, amssymb, amsthm}
\usepackage{mathrsfs}
\usepackage{mathtools}
\usepackage[noabbrev,capitalize,nameinlink]{cleveref}
\crefname{equation}{}{}
\usepackage{fullpage}
\usepackage[noadjust]{cite}
\usepackage{graphics}
\usepackage{pifont}
\usepackage{tikz}
\usepackage{bbm}
\usepackage[T1]{fontenc}

\usetikzlibrary{arrows.meta}

\usepackage{environ}
\usepackage{framed}
\usepackage{url}
\usepackage[linesnumbered,ruled,vlined]{algorithm2e}
\usepackage[noend]{algpseudocode}
\usepackage[labelfont=bf]{caption}
\usepackage{cite}
\usepackage{framed}
\usepackage[framemethod=tikz]{mdframed}
\usepackage{appendix}
\usepackage{graphicx}
\usepackage[textsize=tiny]{todonotes}
\usepackage{tcolorbox}
\allowdisplaybreaks[1]

\crefname{algocf}{Algorithm}{Algorithms}

\crefname{equation}{}{} 
\AtBeginEnvironment{appendices}{\crefalias{section}{appendix}} 

\usepackage{enumerate}
\usepackage{showlabels} 
\usepackage[color,final]{showkeys} 

\colorlet{refkey}{orange!20}
\colorlet{labelkey}{blue!30}

\crefname{algocf}{Algorithm}{Algorithms}

\numberwithin{equation}{section}
\newtheorem{theorem}{Theorem}[section]
\newtheorem{proposition}[theorem]{Proposition}
\newtheorem{lemma}[theorem]{Lemma}

\crefname{claim}{Claim}{Claims}

\newtheorem{corollary}[theorem]{Corollary}

\newtheorem*{question*}{Question}
\newtheorem{fact}[theorem]{Fact}

\theoremstyle{definition}
\newtheorem{definition}[theorem]{Definition}

\newtheorem*{definition*}{Definition}

\theoremstyle{remark}
\newtheorem*{remark}{Remark}


\newcommand{\set}[1]{\bigg\{ #1 \bigg\}}

\newcommand{\mb}{\mathbb}

\newcommand{\mbm}{\mathbbm}
\newcommand{\mc}{\mathcal}

\newcommand{\on}{\operatorname}
\newcommand{\wh}{\widehat}

\newcommand{\var}{\on{Var}}

	\newcommand{\declaredelims}[3]{%
		\expandafter\DeclarePairedDelimiterX\csname#1\endcsname[1]{#2}{#3}{\renewcommand\given{\GivenSymbol[\delimsize]}\renewcommand\mid{\MidSymbol[\delimsize]}##1}%
	}
	
	
	\newcommand{\declarenameddelims}[4]{%
		\expandafter\DeclarePairedDelimiterXPP\csname#1\endcsname[1]{#2}{#3}{#4}{}{\renewcommand\given{\GivenSymbol[\delimsize]}\renewcommand\mid{\MidSymbol[\delimsize]}##1}%
		\expandafter\DeclarePairedDelimiterXPP\csname#1X\endcsname[2]{#2_{##1}}{#3}{#4}{}{\renewcommand\given{\GivenSymbol[\delimsize]}\renewcommand\mid{\MidSymbol[\delimsize]}##2}%
	}

	\declarenameddelims{P}{\mathbb{P}}\lbrack\rbrack
	\declarenameddelims{Ex}{\mathbb{E}}\lbrack\rbrack


	\let\R\relax
	\newcommand*{\R}{{\mathbb{R}}}

	\let\1\relax
	\newcommand*{\1}{{\mathds{1}}}
	\let\l\relax
	\newcommand*{\l}{{\ell}}
	
	\let\poly\relax
	\DeclareMathOperator{\poly}{poly}

	\providecommand{\given}{}
	\newcommand\GivenSymbol[1][]{%
		\nonscript\:#1\vert%
		\allowbreak%
		\nonscript\:%
		\mathopen{}%
	}

\allowdisplaybreaks


\author{
Vishesh Jain \\
Stanford University \\
\texttt{visheshj@stanford.edu}
\and
Huy Tuan Pham \\
Stanford University \\
\texttt{huypham@stanford.edu}
\and
Thuy Duong Vuong \\
Stanford University \\
\texttt{tdvuong@stanford.edu}
}

\date{}

\begin{document}
\title{Spectral independence, coupling with the stationary distribution, and the spectral gap of the Glauber dynamics}

\begin{titlepage}
\clearpage\maketitle
\thispagestyle{empty}
 We present a new lower bound on the spectral gap of the Glauber dynamics for the Gibbs distribution of a spectrally independent $q$-spin system on a graph $G = (V,E)$ with maximum degree $\Delta$. Notably, for several interesting examples, our bound covers the entire regime of $\Delta$ excluded by arguments based on coupling with the stationary distribution. As concrete applications, by combining our new lower bound with known spectral independence computations and known coupling arguments:
    \begin{itemize}
        \item We show that for a triangle-free graph $G = (V,E)$ with maximum degree $\Delta \geq 3$, the Glauber dynamics for the uniform distribution on proper $k$-colorings with $k \geq (1.763\dots + \delta)\Delta$ colors has spectral gap $\tilde{\Omega}_{\delta}(|V|^{-1})$. Previously, such a result was known either if the girth of $G$ is at least $5$ [Dyer et.~al, FOCS 2004], or under restrictions on $\Delta$ [Chen et.~al, STOC 2021; Hayes-Vigoda, FOCS 2003].
        \item We show that for a regular graph $G = (V,E)$ with degree $\Delta \geq 3$ and girth at least $6$, and for any $\varepsilon, \delta > 0$, the partition function of the hardcore model with fugacity $\lambda \leq (1-\delta)\lambda_{c}(\Delta)$ may be approximated within a $(1+\varepsilon)$-multiplicative factor in time $\tilde{O}_{\delta}(n^{2}\varepsilon^{-2})$. Previously, such a result was known if the girth is at least $7$ [Efthymiou et.~al, SICOMP 2019].
        \item We show for the binomial random graph $G(n,d/n)$ with $d = O(1)$, with high probability, an approximately uniformly random matching may be sampled in time $O_{d}(n^{2+o(1)})$. This improves the corresponding running time of $\tilde{O}_{d}(n^{3})$ due to [Jerrum-Sinclair, SICOMP 1989; Jerrum, 2003].
    \end{itemize}
\end{titlepage}
\newpage

\section{Introduction}
Consider an undirected graph $G = (V,E)$ with vertices $V$ and edges $E \subseteq \binom{V}{2}$. Let $q \geq 2$ be an integer and $[q]$ denote the discrete interval $\{1,\dots, q\}$. A \emph{$q$-spin system} on the graph $G = (V,E)$ is parameterized by an entry-wise non-negative symmetric matrix $A \in \mb{R}^{q \times q}_{\geq 0}$ (the `interaction matrix') and an entry-wise positive vector $h \in \mb{R}^{q}_{>0}$ (the vector of `external fields'). This definition includes many widely studied objects in statistical physics, theoretical computer science, and combinatorics. We list three examples, which will be revisited when we discuss applications of the main result of this article. 

\begin{itemize}
    \item Zero-temperature antiferromagnetic Potts model. Here, $A = J_{q \times q} - I_{q \times q}$, where $I_{q \times q}$ is the $q\times q$ identity matrix and $J_{q \times q}$ is the $q\times q$ all-ones matrix, and $h = \mbm{1}_{q}$, the $q$-dimensional all-ones vector. The Gibbs distribution corresponds to the uniform distribution on proper $q$-colorings of $G = (V,E)$.
    
    \item Hardcore model. Here, $q = 2$ (it is conventional to identify $[2]$ with $\{0,1\}$ in this case), $h = \lambda \mbm{1}_{2}$, where $\lambda > 0$ is known as the `fugacity' and $\mbm{1}_{2}$ is the $2$-dimensional all-ones vector, and 
    \[A = 
    \begin{bmatrix}
    1 & 1\\
    1 & 0
    \end{bmatrix}
    \]
    The Gibbs distribution corresponds to a distribution over independent sets of $G = (V,E)$ where the probability of an independent set $I$ is proportional to $\lambda^{|I|}$.
    
    \item Monomer-dimer model. For a graph $G = (V,E)$, recall that the line graph $L(G) = (E, \wh{E})$ is a graph with vertices $E$ and for $e \neq e' \in E$, an edge $\{e,e'\} \in \wh{E}$ if and only if $e,e'$ share a vertex in $G$. The monomer-dimer model on $G = (V,E)$ with parameter $\lambda > 0$ refers to the hardcore model on $L(G)$ with fugacity $\lambda$. The Gibbs distribution is the distribution on matchings of $G$ where the probability of a matching $M$ is proportional to $\lambda^{|M|}$.
\end{itemize}

A configuration of the spin system is an assignment of spins to vertices, i.e.~an element $\sigma \in [q]^{V}$. The Gibbs distribution $\mu = \mu_{G,A,h}$ is a probability distribution on $[q]^{V}$ defined, for $\sigma \in [q]^{V}$, by 
\[\mu(\sigma) = \frac{1}{Z_G(A,h)}\prod_{\{u,v\}\in E}A(\sigma_u, \sigma_v) \prod_{v \in V}h(\sigma_v),\]
where the normalizing constant
\[Z_G(A,h) = \sum_{\sigma \in [q]^V}\prod_{\{u,v\}\in E}A(\sigma_u, \sigma_v) \prod_{v \in V}h(\sigma_v) \]
is known as the partition function.

Sampling from the Gibbs distribution and approximating the partition function are fundamental computational tasks \cite{jerrum2003counting,LP17}. The Markov Chain Monte Carlo (MCMC) paradigm (cf.~\cite{aldous2002reversible, jerrum2003counting, LP17, sinclair1989approximate}) provides perhaps the most versatile and powerful approach to these very general problems and has been the subject of intense study in the past four decades. A particularly simple and popular Markov chain for sampling from the Gibbs distribution of a $q$-spin system on a graph $G = (V,E)$ is the \emph{(single site) Glauber dynamics (or Gibbs sampling)}, defined as follows: starting from a (possibly random) initial configuration $X_0 \in [q]^{V}$, for each integer $t\geq 1$, the configuration $X_{t}$ is generated from $X_{t-1}$ as follows: let $v$ be a uniformly chosen element of $V$ and let $Q$ be sampled from the distribution
\[\mu[\sigma_{v} = \cdot \mid \sigma_{w} = (X_{t-1})(w) \text{      } \forall w\neq v].\]
Then, set $X_{t}(w) = X_{t-1}(w)$ for all $w\neq v$ and $X_{t}(v) = Q$.

Let $P_{\mu}$ denote the transition matrix of the Glauber dynamics. It is readily seen that $P_{\mu}$ is reversible with respect to $\mu$ i.e.~for all $\sigma, \sigma' \in [q]^{V}$,
\[\mu(\sigma)P_{\mu}(\sigma, \sigma') = \mu(\sigma')P_{\mu}(\sigma',\sigma).\]
In particular, $\mu$ is a stationary distribution for $P_{\mu}$. Assuming further that $P_{\mu}$ is irreducible (this will be readily satisfied in all our applications), $\mu$ is the unique stationary distribution for $P_{\mu}$. Since $P_{\mu}$ is trivially aperiodic, it follows (cf.~\cite{LP17}) that in this case, for any distribution on the starting configuration $X_0$,
\[\limsup_{t\to \infty}\|P_{\mu}^{t}(X_0, \cdot) - \mu\|_{\on{TV}} = 0,\]
where $\|\cdot - \cdot\|_{\on{TV}}$ denotes the total variation distance between probability distributions.  

For algorithmic applications, we are interested in the rate at which $\|P_{\mu}^{t}(X_0, \cdot) - \mu\|_{\on{TV}}$ decays to $0$. In the worst-case scenario, this is captured by the \emph{mixing time}. Concretely, for an ergodic transition matrix $P$ on a finite state space $\mc{S}$ with stationary distribution $\mu$, and for $\varepsilon \in (0,1)$, the $\varepsilon$-mixing time is defined as
\[\tau_{\on{mix}}(\varepsilon) = \max_{\sigma \in \mc{S}}\min\{t\geq 0: \|P^{t}(\sigma, \cdot) - \mu\|_{\on{TV}} \leq \varepsilon\}.\]
The term mixing time commonly refers to $\tau_{\on{mix}}(1/4)$.

The previous definition considers the worst-case starting state. If the initial state $X_0$ is distributed according to the probability distribution $\pi_0$ on $\mc{S}$, we have the more refined quantity
\[\tau_{\on{mix}}(\varepsilon, \pi_0) = \min\{t\geq 0: \|P^{t}(\pi_0, \cdot) - \mu\|_{\on{TV}} \leq \varepsilon\}.\]

Classical methods for bounding the mixing time include the method of canonical paths (\cite{jerrum1989approximating}) and the coupling method (cf.~\cite{LP17}). The past few years have witnessed the emergence of an attractive method for bounding the mixing time, based on local-to-global arguments for high-dimensional expanders \cite{ALO20, dinur2017high, kaufman2018high, alev2020improved, oppenheim2018local}. Of direct relevance to us is the work of Anari, Liu, and Oveis Gharan \cite{ALO20}, who introduced the notion of \emph{spectral independence} (see \cref{sub:spectral-independence} for an introduction) as a way of proving that the Glauber dynamics mixes rapidly. This notion, introduced in \cite{ALO20} for Boolean spin systems, was further developed in the works \cite{feng2021rapid, chen2021rapid}. We defer a precise definition to \cref{def:spectral-independence} in \cref{sub:spectral-independence}, but the upshot is the following:
\begin{itemize}
    \item Many interesting spin systems, such as the hardcore model below critical fugacity (\cite{ALO20}), proper colorings of triangle free graphs of maximum degree $\Delta$ with $(1.763.. + \delta)\Delta$ colors (\cite{feng2021rapid, chen2021rapid}, and antiferromagnetic 2-spin systems on bounded degree graphs in the tree uniqueness regime (with some gap, \cite{chen2020rapid, CLV20}) are $(C,\eta)$-spectrally independent with $C = O(1)$ and $\eta \in [0,1)$, possibly close to $1$ inverse polynomially in $q$ or the maximum degree $\Delta$.
    \item (\cite{ALO20}, extended by \cite{feng2021rapid, chen2021rapid}; see also \cref{thm:ALO} below) For a $(C,\eta)$-spectrally independent $q$-spin system on a graph $G = (V,E)$, the Glauber dynamics mixes in time 
    $$O(|V|^{2+2C}\cdot (1-\eta)^{-2-2C}\cdot \log{q}).$$
\end{itemize}
While this approach was successful in providing the first polynomial time approximate sampling algorithms for many interesting models, the drawback is that the parameter $C$ can be quite a large constant; for instance, in the case of proper colorings of triangle free graphs of maximum degree $\Delta$ with $(1.763.. + \delta)\Delta$ colors, the best known bound \cite{feng2021rapid, chen2021rapid} is $C = O(1/\delta)$, which leads to a mixing time of the form $$O\left(|V|^{O(1/\delta)}\right).$$

In the case of $q$-spin systems on graphs $G = (V,E)$ of maximum degree at most $\Delta$ and for which the marginals of the Gibbs distribution are lower bounded by $b > 0$ (even under conditioning on an arbitrary proper subset of the spins), a remarkable recent paper of Chen, Liu, and Vigoda \cite{CLV20} showed that the mixing time of the Glauber dynamics is
\[O\left(n\log{n} \cdot \left(\frac{\Delta}{b}\right)^{O\left(\frac{C}{b^{2}}\right)} \right),\]
which was improved in a very recent work \cite{blanca2021mixing} to
\[O\left(n\log{n} \cdot \left(\frac{\Delta}{b}\right)^{O\left(1 + \frac{C}{b}\right)} \right).\]
For $q$-spin systems where $\Delta, q, b^{-1} = O(1)$ (these conditions are guaranteed by boundedness of $\Delta,q$ as well as of the entries of the interaction matrix and external field), the dependence on $n$ is optimal \cite{hayes2005general}. However, the running time grows quite rapidly with the parameters $\Delta, q, b^{-1}$ -- for instance, in the case of properly coloring triangle free graphs of maximum degree $\Delta$ with $(1.763\dots + \delta)\Delta$ colors, the running time is of the form
\[O\left(n\log{n}\cdot \Delta^{O(\delta^{-1} \cdot \Delta)}\right).\]

It was asked in \cite[Section~8]{CLV20} whether the dependence of the running time on the maximum degree and the spectral independence parameters can be improved. This is the focus of the present work.

\subsection{Our results} Our results are best stated in terms of the \emph{spectral gap} of the Glauber dynamics. Let $\mu$ denote the Gibbs distribution of a $q$-spin system on $G = (V,E)$ and let $P$ denote the transition matrix of the Glauber dynamics on the state space $\mc{S} = [q]^{V}$. Since $P$ is reversible with respect to $\mu$, all the eigenvalues of $P$ are real. Let us denote these eigenvalues by
\[1 = \lambda_{1} \geq \lambda_{2} \geq \dots \geq \lambda_{|\mc{S}|} \geq -1\]
The (absolute) spectral gap of $P$ is defined by
\[\lambda_{\ast} = 1 - \max\{|\lambda_2|, |\lambda_{|\mc{S}|}|\}\]
and the relaxation time is defined by
\[\tau_{\on{rel}} = \frac{1}{\gamma_\ast}.\]
The following relations between the $\varepsilon$-mixing time and the relaxation time for a reversible, ergodic transition matrix $P$ is well-known (cf.~\cite[Equation~12.8, Theorem~12.5]{LP17}): for all $\varepsilon \in (0,1)$ and for all probability distributions $\pi_0$ on $\mc{S}$ 
\begin{equation}
\label{eqn:gap-to-warm}
    \tau_{\on{mix}}(\varepsilon, \pi_0) \leq \tau_{\on{rel}}\log\left(\frac{1}{\varepsilon}\cdot \max_{x\in \mc{S}}\frac{\pi_0(x)}{\mu(x)}\right).
\end{equation}
In particular,
\begin{equation}
    \label{eqn:gap-to-mixing}
    \tau_{\on{mix}}(\varepsilon) \leq \tau_{\on{rel}}\log\left(\frac{1}{\varepsilon}\cdot \frac{1}{\min_{x\in \mc{S}}\mu(x)}\right).
\end{equation}
We also have a bound in the other direction:
\begin{equation}
    \label{eqn:mixing-to-gap}
    \tau_{\on{mix}}(\varepsilon) \geq (\tau_{\on{rel}}-1)\log\left(\frac{1}{2\varepsilon}\right).
\end{equation}

Our main technical result is the following. 
\begin{theorem}\label{thm:spectral gap}
Consider a graph $G = (V,E)$ of maximum degree $\Delta$. Let $C \geq 0$ and $\eta \in [0,1)$. Suppose that $\mu$ is a $(C,\eta)$-spectrally independent distribution on $[q]^{V}$ and that $50\lceil 2C \rceil \Delta \leq n$. Then, the Glauber dynamics for $\mu$ has spectral gap at least 
\[c_{\ref{thm:spectral gap}} \frac{(1-\eta)^{1+2C}}{(25\Delta \lceil 2C\rceil)^{5\cdot \lceil 2C \rceil}}\cdot \frac{1}{|V|},\]
where $c_{\ref{thm:spectral gap}} > 0$ is a universal constant. 
\end{theorem}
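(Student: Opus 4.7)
The target bound is linear in $1/|V|$, which is out of reach of a direct application of the Alev--Liu--Oveis Gharan (ALO) local-to-global theorem under mere $(C,\eta)$-spectral independence; that approach yields only a $|V|^{-(1+2C)}$ prefactor (cf.~\cref{thm:ALO}). The plan is to truncate the local-to-global argument to the top $\Theta(C)$ levels of the simplicial complex associated with $\mu$, and to handle the resulting block dynamics by a separate, $|V|$-independent bound that pays $\poly(\Delta,C)^{O(C)}$.

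\textbf{Step 1 (Truncated local-to-global).} Set $k := \lceil 2C\rceil$. We apply the local-to-global inequality only to the top $k+1$ local walks $D_{|V|-k-1},\ldots,D_{|V|-2}$, obtaining a bound of the form
\[
\lambda_\ast(P_\mu) \;\geq\; \frac{1}{|V|}\cdot \prod_{i=|V|-k-1}^{|V|-2}\bigl(1-\lambda_2(D_i)\bigr)\cdot \lambda_\ast(Q_{k+1}),
\]
where $Q_{k+1}$ denotes the block dynamics that picks a uniformly random $(k+1)$-subset $S\subseteq V$ and resamples $X_S$ from the conditional $\mu(\,\cdot\mid X_{V\setminus S}\,)$. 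By the definition of $(C,\eta)$-spectral independence, we have $\lambda_2(D_i)\leq \eta$ for each $i\geq |V|-k-1$ (the usual spectral-independence bound $C/(|V|-i-1)$ degenerates at these high levels, so the $\eta$ bound is the operative one), yielding the numerator $(1-\eta)^{1+2C}$.

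\textbf{Step 2 (Gap of the block dynamics).} For each block $S$ of size $k+1$ and each boundary $\tau\in[q]^{V\setminus S}$, the conditional distribution $\mu(\,\cdot\mid X_{V\setminus S}=\tau\,)$ is itself the Gibbs distribution of a $q$-spin system restricted to $S$ with effective external fields induced by $\tau$, on a graph of maximum degree at most $\Delta$. The plan is to show by a direct coupling-with-stationary-distribution argument, applied to the at most $k+1=O(C)$ free vertices, that $\lambda_\ast(Q_{k+1})\geq (25\Delta k)^{-5k}$. Since the block is small, \emph{amortized} contraction over $\poly(\Delta,k)^{O(k)}$ steps suffices (even when $\Delta$ is too large for single-step contraction to hold globally), and converts via the standard path-coupling-to-gap correspondence to the stated spectral gap bound.

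\textbf{Step 3 (Combination).} Combining the two bounds yields
\[
\lambda_\ast(P_\mu)\;\geq\; c_{\ref{thm:spectral gap}}\cdot \frac{(1-\eta)^{1+2C}}{(25\Delta\lceil 2C\rceil)^{5\lceil 2C\rceil}}\cdot \frac{1}{|V|},
\]
as required. The hypothesis $50\lceil 2C\rceil\Delta\leq |V|$ ensures that the block size is negligible compared to $|V|$, which is used both in the truncated local-to-global step and in the block coupling step.

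The main obstacle is Step 2: one must produce a gap bound for the block dynamics on arbitrary conditional distributions of $\mu$ that is $|V|$-independent and depends only polynomially in $\Delta$ for each fixed $C$, \emph{without} any further structural assumption such as a marginal lower bound. Step 1, by contrast, is essentially a reapplication of known ALO-style machinery at a truncated depth, and should follow from the standard telescoping argument restricted to the top $k+1$ levels.
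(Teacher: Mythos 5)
Your Step~1 decomposition is not correct, and this error propagates through the rest of the argument. The chain $Q_{k+1}$ you introduce --- pick a uniformly random $(k+1)$-subset $S\subseteq V$ and resample $X_S$ from $\mu(\cdot\mid X_{V\setminus S})$ --- is precisely the $n\leftrightarrow (n-k-1)$ down-up walk, a \emph{global} chain whose spectral gap is at most of order $(k+1)/n$ (each vertex is touched with probability only $(k+1)/n$ per step). So $\lambda_\ast(Q_{k+1})$ cannot be $\Omega(1)$ independent of $n$, as your Step~2 would require; plugging the true $O(k/n)$ gap into your displayed inequality yields $O(k/n^2)$, which is an order of magnitude too weak. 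You appear to be conflating the gap of the global block dynamics with the gap of the \emph{within-block conditional} Glauber dynamics on the $k+1$ free vertices; only the latter is $n$-independent. Even after correcting this, using constant-size blocks cannot work: for $\ell$-uniform block factorization of variance the achievable constant for spectrally independent $\mu$ scales like $(n/\ell)^{\Theta(C)}$ (cf.~\cref{lem:l-uniform-factorization}), which is polynomial in $n$ when $\ell = O(C)$. There is also no reason an unspecified ``coupling with the stationary distribution'' should produce the $\Delta$-dependent within-block bound you posit; no such coupling is available at this level of generality, and indeed the paper uses no coupling whatsoever in this proof.

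The paper's route is structurally different and worth internalizing. It chooses blocks of size $\ell = \lceil\theta n\rceil$ with $\theta \le 1/(4e\Delta)$ --- \emph{large}, of order $n/\Delta$ --- precisely so that the $\ell$-uniform block factorization constant is $(2/\theta)^{\lceil 2C\rceil+1} = \Delta^{O(C)}$ rather than $n^{O(C)}$ (\cref{lem:l-uniform-factorization}). The price is that the blocks are large, so one cannot directly apply an ALO-type bound inside a block. The key additional step, entirely missing from your proposal, is to exploit the max-degree $\Delta$ bound on $G$: a uniformly random $\ell$-subset with $\ell \approx n/(4e\Delta)$ decomposes into connected components whose sizes have exponentially decaying tails (\cref{fact:number-connected}), variance tensorizes over these components (\cref{fact:tensorization-product}), and only \emph{then} does one apply the ALO-type bound \cref{lem: spectral gap} --- with constant $k^{2C}/(1-\eta)^{1+2C}$, $\Delta$-free --- inside each small component. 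The $\Delta$ dependence in the final bound thus enters through the block-size choice in the factorization step, not through any within-block coupling. This interplay between the block factorization scale and the graph's sparsity is the heart of the proof; a truncated local-to-global argument with $O(C)$-size blocks cannot see it.
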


\begin{remark}
This theorem essentially replaces the denominator $|V|^{O(C)}$ appearing in \cref{thm:ALO} of \cite{ALO20, feng2021rapid, chen2021rapid} by the potentially much smaller $\Delta^{O(C)}$. Compared to the results in \cite{CLV20}, the above theorem has the advantage of not requiring any lower bound on the marginals and moreover, has a dependence of $\Delta^{O(C)}$ as opposed to the worse dependence of $\Delta^{O(C\cdot \Delta)}$. On the other hand, we are only able to provide a bound on the spectral gap, whereas \cite{CLV20} provide such bounds even for the modified log-Sobolev constant, which leads to better dependence on $n$ for the mixing time starting from a worst-case initial state.  
\end{remark}

At this point, the improvement in \cref{thm:spectral gap} appears fairly technical, and the reader may rightly wonder if there are any serious applications. The power of \cref{thm:spectral gap} is most apparent when used in combination with well-known coupling arguments. The high-level idea is the following: consider a $q$-spin system on a graph $G = (V,E)$ with maximum degree $\Delta$. Then, \cref{thm:spectral gap} allows us to obtain essentially the correct spectral gap provided that $\Delta = n^{o(1/C)} = n^{o(1)}$, assuming that $C = O(1)$. On the other hand, in many interesting cases, coupling arguments based on `local uniformity' (cf.~\cite{HV06, hayes2003non, hayes2013local, lau2006randomly, dyer2004randomly}) suffice to handle the case when $\Delta$ is at least polylogarithmic in $n$. Taken together, these arguments cover the entire range of $\Delta$. Note that, for such an argument to work, the dependence of the form $O(\Delta^{O(C\cdot \Delta)})$ obtained in \cite{blanca2021mixing} is insufficient since it only works up to $\Delta = o(\log{n}/\log\log{n})$, thereby leaving a gap in the regime for $\Delta$. 

As a concrete illustration of this general idea, by combining \cref{thm:spectral gap} with known spectral independence calculations \cite{feng2021rapid, chen2021rapid}, together with a coupling argument of Hayes and Vigoda \cite{HV06}, we obtain the following. 
\begin{theorem}
\label{thm:coloring}
Let $\alpha^* = 1.763\dots$ denote the unique solution to the equation $x = \exp(1/x)$. Let $G = (V,E)$ be a triangle-free graph of maximum degree at most $\Delta$. Let $|V|=n$. Then, for every $\Delta \geq 3$, for every $\delta > 0$, and for every $k \geq (1+\delta)\alpha^* \Delta$, the Glauber dynamics for the uniform distribution on proper $k$-colorings has spectral gap at least
\[\tilde{\Omega}_{\delta}\left(\frac{1}{n}\right),\]
where $\tilde{\Omega}$ conceals a factor of $e^{O((\log\log{n})^{2})}$.
\end{theorem}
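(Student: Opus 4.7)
The plan is to combine \cref{thm:spectral gap} with two complementary ingredients from the literature: (i) the spectral-independence computation of \cite{feng2021rapid, chen2021rapid} for proper $k$-colorings of triangle-free graphs, and (ii) the non-Markovian coupling of \cite{HV06}. These two ingredients are complementary with respect to $\Delta$: the route through \cref{thm:spectral gap} is most effective when $\Delta$ is small relative to $n$, while the Hayes-Vigoda coupling requires $\Delta$ to be at least polylogarithmic in $n$. Choosing a threshold between the two regimes will cover all $\Delta \geq 3$.

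First I would invoke the computation of \cite{feng2021rapid, chen2021rapid}: for a triangle-free graph $G$ of maximum degree $\Delta$ and for $k \geq (1+\delta)\alpha^* \Delta$, the uniform distribution $\mu$ on proper $k$-colorings is $(C, \eta)$-spectrally independent (uniformly over arbitrary pinnings of a proper subset of vertices) with $C = C(\delta) = O(1/\delta)$ and $1 - \eta = \Omega_\delta(1)$; crucially, both $C$ and $\eta$ depend only on $\delta$, not on $\Delta$ or on $n$. Now set the threshold $T = T(n) := \exp((\log \log n)^2)$ and split on whether $\Delta \leq T$ or $\Delta > T$.

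In the low-degree regime $3 \leq \Delta \leq T$, the hypothesis $50 \lceil 2C \rceil \Delta \leq n$ of \cref{thm:spectral gap} holds for all $n$ sufficiently large, since $C = O_\delta(1)$ and $T = n^{o(1)}$. Applying \cref{thm:spectral gap} gives a spectral gap of at least
\[
\frac{c_{\ref{thm:spectral gap}} (1-\eta)^{1+2C}}{(25 \Delta \lceil 2C \rceil)^{5 \cdot \lceil 2C \rceil}} \cdot \frac{1}{n} \;\geq\; \frac{1}{n} \cdot \exp\!\left(-O_\delta(\log \Delta)\right) \;\geq\; \frac{1}{n} \cdot \exp\!\left(-O_\delta((\log \log n)^2)\right),
\]
which is $\tilde\Omega_\delta(1/n)$ in precisely the sense specified by the theorem. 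In the complementary regime $\Delta > T$, I would invoke the non-Markovian coupling of \cite{HV06}, which establishes an $O_\delta(n \log n)$ worst-case mixing time for Glauber on triangle-free graphs with $k \geq (1+\delta)\alpha^*\Delta$, provided $\Delta$ exceeds some $C'(\delta) \cdot \log n$. Since $T \gg \log n$ for large $n$, their hypothesis is satisfied throughout this regime. By \eqref{eqn:mixing-to-gap}, an $O_\delta(n \log n)$ bound on $\tau_{\on{mix}}(1/4)$ yields $\tau_{\on{rel}} = O_\delta(n \log n)$, i.e.\ a spectral gap of $\Omega_\delta(1/(n \log n)) = \tilde\Omega_\delta(1/n)$.

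The main obstacle is the careful bookkeeping around the imported spectral-independence bound: one must verify that the constants $C,\eta$ extracted from \cite{feng2021rapid, chen2021rapid} are genuinely $\Delta$-independent and apply under arbitrary pinnings, so that \cref{thm:spectral gap} can be invoked cleanly. A secondary point is a uniform statement of Hayes-Vigoda that handles \emph{all} $\Delta > T$ rather than only an asymptotic range; this is immediate once one notes that the threshold in \cite{HV06} grows only as $\Theta_\delta(\log n)$ while $T \gg \log n$ for $n$ large, with finitely many small-$n$ cases absorbed into the implicit constants. Modulo these technical imports, the argument is just the two-regime split described above.
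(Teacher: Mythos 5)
Your overall strategy — split on $\Delta$ above/below a threshold, use \cref{thm:spectral gap} with the spectral-independence bounds of \cite{feng2021rapid,chen2021rapid} in the low-degree regime, and invoke the Hayes--Vigoda coupling \cite{HV06} together with \eqref{eqn:mixing-to-gap} in the high-degree regime — is exactly the paper's approach. However, your imported spectral-independence constants are wrong in a way that, if taken at face value, would invalidate the argument. The bound actually established in \cite[Lemma~6.4]{feng2021rapid} (as quoted in the paper) is $C = O(\delta^{-2})$ and $\eta = 1 - k^{-O(\delta^{-2})}$, \emph{not} $C = O(1/\delta)$ and $1 - \eta = \Omega_\delta(1)$. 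The claim that $1-\eta$ is bounded below by a constant depending only on $\delta$ is simply false; it decays polynomially in $k$. Consequently the factor $(1-\eta)^{1+2C}$ in \cref{thm:spectral gap} contributes an additional $k^{-O(\delta^{-4})}$, and this is \emph{not} absorbed into $O_\delta(1)$. You must bound $k$ to control it, which is why the paper's proof first reduces to $k \le 2\Delta$ (for $k > 2\Delta$ fast mixing is classical, e.g. via \cite{jerrum1995very}). Since the theorem is quantified over \emph{every} $k \geq (1+\delta)\alpha^*\Delta$, omitting this reduction is a real gap: for $k$ superpolynomially large in $\Delta$ your low-degree bound would degrade below the claimed $\tilde\Omega_\delta(1/n)$.

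Once these two fixes are in place — the correct $(C,\eta)$ and the reduction to $k \leq 2\Delta$ — your displayed chain still lands at $\exp(-O_\delta(\log\Delta))\cdot\frac{1}{n}$, since $k^{-O(\delta^{-4})} \geq (2\Delta)^{-O(\delta^{-4})} = \exp(-O_\delta(\log\Delta))$, and the remainder of your argument (the threshold $T = e^{(\log\log n)^2}$, which dominates $\log n$ for large $n$, and absorbing small $n$ into constants) is sound and mirrors the paper's condition $\delta^{-4} = O(\log\log n)$, $\Delta = O(\log n/\delta^2)$.
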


\begin{remark}
From this spectral gap, one can immediately deduce that the mixing time of the Glauber dynamics is $\tilde{O}_{\delta}(n^{2})$ from an arbitrary initialization and $\tilde{O}_{\delta}(n)$ from a ``warm start'' i.e.~the ratio of the distribution of $X_0$ and the Gibbs distribution is polynomially bounded. We note that in previous work \cite{dyer2004randomly}, an optimal bound of $O_{\delta}(n\log{n})$ on the mixing time from an arbitrary initialization was obtained if the girth is at least $5$ and if $\Delta$ is sufficiently large.
\end{remark}

As was mentioned in the remark following \cref{thm:spectral gap}, bounding the modified log-Sobolev constant, as in \cite{CLV20}, leads to running times from worst-case initializations that are out of the reach of analyses based only on the spectral gap. For instance, even an essentially optimal spectral gap bound of $\tilde{\Omega}_{\delta}(n^{-1})$ in \cref{thm:coloring} leads to the sub-optimal mixing time of $\tilde{O}_{\delta}(n^{2})$ from a worst-case initialization. Nonetheless, for applications to \emph{approximate counting}, where one requires many samples, nearly all of which are from a ``warm-start'', this gain of a factor of $n$ from bounding the modified log-Sobolev constant often disappears (see \cite{vstefankovivc2009adaptive}). 

As an application of this, we consider the problem of approximating the partition function on a $\Delta$-regular graph at fugacity $\lambda \leq (1-\delta)\lambda_{c}(\Delta)$, where
$$\lambda_{c}(\Delta) = \frac{(\Delta - 1)^{\Delta-1}}{(\Delta-2)^{\Delta}}$$
is the critical point for the uniqueness/non-uniqueness phase transition on the $\Delta$-regular tree. By using \cref{thm:spectral gap} with known spectral independence calculations \cite{chen2020rapid}, together with a coupling argument of Hayes and Vigoda \cite{HV06}, we are able to quickly recover a (slightly more general version of a) result of Efthymiou et al.~\cite{efthymiou2019convergence}, which had been obtained by using a rather involved and lengthy local-uniformity argument. 

\begin{theorem}
\label{thm:hardcore} 
Let $G = (V,E)$ be a $\Delta$-regular graph with $\Delta \geq 3$ and with girth $\geq 6$, let $\delta > 0$, and let $\lambda \leq (1-\delta) \lambda_c(\Delta).$ Let $Z_{G, \lambda}$ denote the partition function of the hardcore distribution on $G$ with fugacity $\lambda$. Let $|V|=n$. Then the mixing time of the Glauber dynamics for the hardcore distribution at fugacity $\lambda$ from a warm start is
\[\tilde{O}_{\delta}\left(n\right).\]
Moreover, there exists an algorithm which, given $\varepsilon >0,$ outputs (with constant probability) a $(1+\varepsilon)$-multiplicative approximation of $Z_{G, \lambda}$ in time $$\tilde{O}_{\delta}(n^2 \poly(1/\varepsilon)).$$ 
Here, the tilde conceals a factor of $e^{O((\log\log{n})^{2})}$.
\end{theorem}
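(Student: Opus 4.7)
The plan is to first establish a spectral gap bound of $\tilde \Omega_\delta(1/n)$ by dichotomizing on $\Delta$, exactly in the divide-and-conquer spirit laid out in the introduction, and then read off both the warm-start mixing bound and the counting algorithm. For \emph{small} $\Delta$ (say $\Delta \leq \exp(\Theta((\log\log n)^{2}))$), I would invoke the spectral independence of the hardcore distribution below $\lambda_c(\Delta)$ due to Chen, Liu, and Vigoda \cite{chen2020rapid}, which gives $(C,\eta)$-spectral independence with $C = O_\delta(1)$ and $1-\eta = \Omega_\delta(1)$, and plug into \cref{thm:spectral gap} to obtain spectral gap $\Delta^{-O_\delta(1)}/n$, which is $\tilde \Omega_\delta(1/n)$ within the chosen window. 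For \emph{large} $\Delta$, I would invoke the Hayes--Vigoda coupling argument \cite{HV06}, which for the hardcore model on a $\Delta$-regular graph of girth at least $6$ and fugacity below $\lambda_c(\Delta)$ delivers $\tilde O_\delta(n\log n)$ mixing once $\Delta$ is a sufficiently large polylogarithmic function of $n$; by \cref{eqn:mixing-to-gap}, this gives a spectral gap of $\tilde\Omega_\delta(1/n)$ in this regime as well. The two $\Delta$-windows can be chosen to overlap, so every $\Delta \geq 3$ is covered.

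\textbf{Warm-start mixing.} The warm-start mixing bound is then immediate from the spectral gap: by \cref{eqn:gap-to-warm}, any initial distribution $\pi_0$ with $\max_x \pi_0(x)/\mu(x)$ bounded polynomially in $n$ yields $\tau_{\on{mix}}(\varepsilon, \pi_0) = \tilde O_\delta(n\log(1/\varepsilon))$, which is the first half of the theorem.

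\textbf{Partition function.} For the counting task, I would employ the standard simulated-annealing / telescoping scheme: write
\[ Z_{G,\lambda} = Z_{G,0} \prod_{i=1}^{T} \frac{Z_{G,\lambda_i}}{Z_{G,\lambda_{i-1}}} \]
for an adaptive cooling schedule $0 = \lambda_0 < \lambda_1 < \cdots < \lambda_T = \lambda$ with $T = \tilde O(n)$ stages (cf.\ \cite{vstefankovivc2009adaptive}), and estimate each ratio via a bounded-variance Monte Carlo estimator based on samples drawn from $\mu_{G,\lambda_i}$. When the schedule is sufficiently fine, the sample at stage $i-1$ is a polynomially warm start for the Glauber chain at stage $i$, so by the warm-start mixing bound each sample costs only $\tilde O_\delta(n)$ time; aggregated over $\tilde O(n\cdot \poly(1/\varepsilon))$ samples, the total runtime is $\tilde O_\delta(n^{2}\poly(1/\varepsilon))$, as claimed.

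\textbf{Main obstacle.} The principal difficulty lies in verifying that the Hayes--Vigoda local-uniformity coupling, whose original target in \cite{HV06} is proper colorings, can be transferred to the hardcore model at fugacity up to $(1-\delta)\lambda_c(\Delta)$ on $\Delta$-regular graphs of girth at least $6$, with quantitative control that kicks in at a polylogarithmic threshold on $\Delta$. Extracting the necessary local-uniformity estimates for the hardcore measure --- so that the large-$\Delta$ regime matches up smoothly with the small-$\Delta$ regime under a single polynomial dependence on $\delta^{-1}$ --- is the main quantitative step; the rest, namely propagating warmness along the cooling schedule and combining \cref{thm:spectral gap} with \cref{eqn:gap-to-warm}, is standard.
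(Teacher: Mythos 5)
Your overall strategy --- dichotomize on $\Delta$, use spectral independence plus \cref{thm:spectral gap} for small $\Delta$, use Hayes--Vigoda for large $\Delta$, then run adaptive annealing with warm-start samples --- is precisely the paper's argument. The paper's actual thresholds are $\delta^{-2} = O(\log\log n)$ and $\Delta = O(\log(n/\delta))$ (so the large-$\Delta$ regime kicks in at a \emph{logarithmic}, not polylogarithmic, threshold), and it uses $C = O(\delta^{-1})$, $\eta = \lambda/(1+\lambda)$ from \cite{chen2020rapid}, but these are minor numerics.

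The ``main obstacle'' you flag does not exist. The cited reference \cite{HV06} is Hayes and Vigoda's \emph{Coupling with the stationary distribution and improved sampling for colorings and independent sets}; its Theorem 4.1 already treats the hardcore model on $\Delta$-regular graphs of girth $\geq 6$ at fugacity below $\lambda_c(\Delta)$ and gives $O(n\log n/\delta)$ mixing from a warm start once $\Delta = \Omega(\log(n/\delta))$. The paper invokes this as a black box; there is nothing to transfer from colorings. You should be aware, though, that this is a \emph{warm-start} bound: your proposed step of converting it to a spectral gap lower bound via \cref{eqn:mixing-to-gap} is not licensed, since that inequality concerns worst-case mixing time. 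Fortunately you do not need a spectral gap bound in the large-$\Delta$ regime --- the HV06 warm-start bound \emph{is} the first conclusion of the theorem there, and the annealing argument only ever needs warm-start mixing (the initial stage at $\lambda_0 = 0$ is trivial). So the correct move is to use HV06's warm-start bound directly, exactly as the paper does, rather than detour through the spectral gap.
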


\begin{remark}
The restriction to $\Delta$-regular graphs (as opposed to graphs of maximum degree $\Delta$) in the above theorem is due to the black-box invocation of a result of Hayes and Vigoda \cite{HV06} and can be likely removed by more careful analysis. The running time $\tilde{O}_{\delta}(n^{2}\on{poly}(1/\varepsilon))$ matches the dependence in $n$ obtained in \cite{efthymiou2019convergence} (for the slightly more restrictive lower bound of $7$ on the girth and for sufficiently large $\Delta$) although note that for obtaining a single sample, their work gives the optimal running time $\tilde{O}_{\delta}(n)$ from an arbitrary initialization. 
\end{remark}

Finally, we note that, even in the absence of an accompanying result in the high maximum-degree regime and even for spectral independence parameters which grow with $n$, \cref{thm:spectral gap} leads to results that may be of interest, for instance, in the study of algorithms on constant average-degree Erd\H{o}s-R\'enyi graphs (which have maximum degree $\Theta(\log{n}/\log\log{n})$). As an example, by combining known spectral independence calculations \cite{CLV20} with \cref{thm:spectral gap}, we can deduce the following. 

\begin{theorem}
\label{thm:matching}
Let $G = (V,E)$ be a graph with maximum degree $\Delta$. Let $|V| = n$ and $|E| = m$. If $\Delta = o((\log n)^2/(\log \log n)^2)$, then the Glauber dynamics for the monomer-dimer model with fugacity $\lambda=1$ (i.e.~the uniform distribution on matchings) mixes in time $O(n^{1+o(1)}m)$ from an arbitrary initial configuration and $O(m^{1+o(1)})$ from a warm-start. 
\end{theorem}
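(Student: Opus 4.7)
The plan is to identify the monomer-dimer model on $G = (V,E)$ at fugacity $\lambda = 1$ with the hardcore model at fugacity $1$ on the line graph $L(G) = (E,\wh{E})$, which has $m$ vertices and maximum degree at most $2(\Delta-1)$. Under this identification, the Glauber dynamics on matchings is exactly the Glauber dynamics on independent sets of $L(G)$: a single step picks a uniformly random edge of $G$ and resamples its indicator conditional on the rest of the current matching. I would then combine the known spectral independence estimate for this distribution from \cite{CLV20} with \cref{thm:spectral gap} applied with $|V| \leftarrow m$ and $\Delta \leftarrow 2(\Delta-1)$.

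First, I would quote from \cite{CLV20} that the monomer-dimer Gibbs measure at fugacity $1$ is $(C,0)$-spectrally independent for a value of $C$ bounded (e.g., by an absolute constant) in a form that ensures the factor $(50(\Delta-1)\lceil 2C\rceil)^{5\lceil 2C\rceil}$ stays below $n^{o(1)}$ across the entire regime $\Delta = o((\log n)^{2}/(\log\log n)^{2})$. Substituting into \cref{thm:spectral gap} yields
\[
\lambda_\ast \ \geq\ \frac{c_{\ref{thm:spectral gap}}}{(50(\Delta-1)\lceil 2C\rceil)^{5\lceil 2C\rceil}}\cdot \frac{1}{m} \ \geq\ \frac{1}{m \cdot n^{o(1)}},
\]
equivalently $\tau_{\on{rel}} \leq m \cdot n^{o(1)}$.

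Finally, I would convert this relaxation-time bound into the two mixing-time estimates via \cref{eqn:gap-to-warm,eqn:gap-to-mixing}. From a warm start (where by convention $\max_x \pi_0(x)/\mu(x) \leq \poly(n)$), \cref{eqn:gap-to-warm} gives $\tau_{\on{mix}}(1/4, \pi_0) \leq \tau_{\on{rel}} \cdot O(\log n) = m^{1+o(1)}$. From an arbitrary initial state, the crude bound that $G$ has at most $(\Delta+1)^n$ matchings implies $\log(1/\min_x \mu(x)) \leq n\log(\Delta+1) = O(n\log\log n) = n^{1+o(1)}$, and \cref{eqn:gap-to-mixing} then yields $\tau_{\on{mix}}(1/4) \leq \tau_{\on{rel}} \cdot n^{1+o(1)} = n^{1+o(1)} m$. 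The main technical subtlety is verifying that the spectral independence constant produced by \cite{CLV20} for the monomer-dimer model is indeed small enough to be compatible with the stated range of $\Delta$; once that is granted, the remainder is a direct application of \cref{thm:spectral gap} together with the standard spectral-gap-to-mixing-time comparisons.
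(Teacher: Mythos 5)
Your outline matches the paper's route (reduce to the hardcore model on the line graph, invoke the spectral independence estimate from \cite{CLV20}, feed it into \cref{thm:spectral gap}, then convert the relaxation-time bound to mixing-time bounds via \cref{eqn:gap-to-warm,eqn:gap-to-mixing}), and your bookkeeping on the line graph's parameters and on the conversion from $\tau_{\on{rel}}$ to the two mixing-time claims is fine. The gap is in the one input you flagged as a ``subtlety'' and then guessed incorrectly: the spectral independence parameter. You posited that the monomer-dimer Gibbs measure at $\lambda = 1$ is $(C,0)$-spectrally independent with $C$ an absolute constant. Neither piece is right. By \cite[Theorem~6.1]{CLV20}, it is $(C,\eta)$-spectrally independent with $C = 2\sqrt{1+\Delta}$ and $\eta = \lambda/(1+\lambda) = 1/2$; in particular $C$ grows with the degree, and $\eta = 0$ is far too strong (it would force every pairwise influence matrix to have nonpositive top eigenvalue). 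This matters because the $\Delta$-threshold in the theorem is \emph{determined} by the $\sqrt{\Delta}$ growth of $C$: substituting $C = \Theta(\sqrt{\Delta})$ into \cref{thm:spectral gap} produces a loss factor of order $\Delta^{O(\sqrt{\Delta})}$, and requiring this to be $n^{o(1)}$ is exactly the condition $\sqrt{\Delta}\log\Delta = o(\log n)$, i.e.~$\Delta = o\bigl((\log n)^2/(\log\log n)^2\bigr)$. Had $C$ been an absolute constant as you assumed, the loss factor would be $\Delta^{O(1)}$ and the theorem would hold for all $\Delta = n^{o(1)}$ --- so the very fact that the stated degree range is as restrictive as it is should have tipped you off that $C$ must grow with $\Delta$. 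Once you replace your assumed parameters with the correct $C = 2\sqrt{1+\Delta}$, $\eta = 1/2$ (and note that the extra $(1-\eta)^{-(1+2C)} = 2^{O(\sqrt{\Delta})}$ factor is dominated by the $\Delta^{O(\sqrt{\Delta})}$ term), your argument closes and coincides with the paper's.
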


\begin{remark}
For a general graph $G = (V,E)$, the best-known mixing time bound is $\tilde{O}(n^{2}m)$ due to Jerrum and Sinclair \cite{jerrum1989approximating}, refined in \cite{jerrum2003counting}. On the other hand, \cite[Theorem~1.5]{CLV20} along with the improvement in \cite{blanca2021mixing} gives the optimal mixing time $O(m^{1+o(1)})$ for graphs with maximum degree $\Delta = O((\log{n})^{2/3 - o(1)})$. Note that this latter degree bound excludes the case of constant average-degree Erd\H{o}s-R\'enyi graphs.
\end{remark}

\subsection{Concluding remarks and future directions} We have provided a new lower bound on the spectral gap of the Glauber dynamics for spectrally independent spin systems, which is substantially better than existing bounds \cite{ALO20, feng2021rapid, chen2021rapid, CLV20} for many interesting parameter regimes. Notably, in the case of well-studied spin systems such as uniform $k$-colorings of triangle-free graphs and the hardcore model on high girth graphs, where coupling methods have succeeded in analysing the high-degree regime, our bound covers the entire regime outside the scope of the coupling method. In particular, this obviates the need for technical and involved local-uniformity based analyses (at least if one is willing to pay an additional factor of $n$ in the worst-case mixing time), which anyway have introduced additional slack in various girth conditions (\cite{dyer2004randomly,efthymiou2019convergence}).  

A natural direction for future work is therefore to (i) bound the spectral independence of the Gibbs distribution in scenarios where coupling methods have succeeded in the high-degree regime, most notably, for the problem of uniformly sampling $k$-colorings on graphs under girth constraints (\cite{dyer2004randomly, hayes2003non, lau2006randomly}), and (ii) devise (coupling-based) arguments in the high-degree regime for models where spectral independence is known, most notably the hardcore model, even on triangle-free graphs.


\section{Preliminaries}
\subsection{Down-up random walk} As in recent works (cf.~\cite{CLV20, ALO20, chen2020rapid, chen2021rapid, feng2021rapid}) we will find it helpful to view the Glauber dynamics as a `local' walk on a certain weighted simplicial complex. To this end, we record the following definition. 
\begin{definition}[Down-Up Random Walk]\label{def:local-walk} Let $0 \leq \ell \leq k \leq n$ be integers. 
	For a density $\mu:\binom{[n]}{k}\to\R_{\geq 0}$, we define the $k\leftrightarrow\l$ down-up random walk as the sequence of random sets $S_0, S_1,\dots$ generated by the following algorithm:
	\begin{algorithm}
		\For{$t=0,1,\dots$}{
			Select $T_t$ uniformly at random from subsets of size $\l$ of $S_t$.\;
			Select $S_{t+1}$ with probability $\propto \mu(S_{t+1})$ from supersets of size $k$ of $T_t$.
		}
	\end{algorithm}
\end{definition}
In particular,  the Glauber dynamics for the Gibbs distribution of a $q$-spin system on a graph $G = (V, E)$ with $|V| = n$ may be viewed as the $n \leftrightarrow (n-1)$ down-up walk with respect to the distribution $\mu': \binom{V \times [q] }{n } \to \R_{\geq 0}$ defined as follows.
Fix an enumeration $v_1, \dots, v_n$ of $V$. The distribution $\mu'$ is supported on the $n$-element sets $\{(v_1, c_1), \dots, (v_n, c_n)\}$, with $c_1, \dots, c_n \in [q]$, and
\[\mu'(\{(v_1, c_1), \dots, (v_n, c_n) \} ) = \mu(\sigma ), \]
where $\sigma = (c_1,\dots, c_n) \in [q]^{V}$ and $\mu$ denotes the Gibbs distribution of the $q$-spin system.

\subsection{Spectral independence}
\label{sub:spectral-independence}
In this subsection, we record the notion of spectral independence, formalized by \cite{ALO20} in the Boolean setting and further developed in subsequent works \cite{feng2021rapid, chen2021rapid, CLV20}. 

Consider a $q$-spin system on the graph $G = (V,E)$ with Gibbs distribution $\mu$. A configuration $\sigma \in [q]^{V}$ is said to be feasible with respect to $\mu$ if $\mu(\sigma) > 0$ i.e.~if $\sigma$ lies in the support of the measure $\mu$. We will use $\Omega(\mu)$ (or simply $\Omega$, when $\mu$ is clear from context) to denote the set of all feasible configurations with respect to $\mu$. Furthermore, for
$\Lambda \subseteq V$, let $\mu_{\Lambda}$ denote the measure on $[q]^{\Lambda}$ induced by the Gibbs distribution $\mu$ and let 
$$\Omega_{\Lambda} = \set{\tau \in [q]^{\Lambda} : \mu_{\Lambda}(\tau ) > 0},$$
i.e.~$\Omega_{\Lambda}$ denotes the collection of all feasible (partial) configurations
on $\Lambda$. For lightness of notation, we will denote $\Omega_{\{v\}}$ simply by $\Omega_{v}$.
Observe that $\Omega_V = \Omega$. 

For
any subset $\Lambda \subseteq V$ and `boundary condition' $\tau \in \Omega_{\Lambda}$, we will consider the conditional distribution
$\mu_S^{\tau}(\cdot) = \mu(\cdot \mid \sigma_{\Lambda} = \tau )$ over configurations on $S = V \setminus \Lambda$, and we shall write $\Omega_U^\tau$
for the set of feasible
(partial) configurations on $U \subseteq S$ under this conditional measure.
\begin{definition}[Influence Matrix]
Given $\Lambda \subsetneq V $ and $\tau \in \Omega_{\Lambda}$, let 
\[\tilde{V}_{\tau} = \set{(u,i) : u \in V \setminus \Lambda, i \in\Omega_u^{\tau} }\]
For every $(u,i), (v,j)\in \tilde{V}_{\tau} $ with $u\neq v,$ we define the (pairwise) influence of $(u, i)$ on $(v,j)$, conditioned on $\tau$, by
\[\Psi_{\mu}^{\tau} ((u,i), (v,j))  = \mu(\sigma_v = j \mid \sigma_u = i , \sigma_{\Lambda} = \tau ) -  \mu(\sigma_v = j \mid \sigma_{\Lambda} = \tau ) \]
We also set $\Psi_{\mu}^{\tau} ((v,j), (v,j)) =0 $ for all $(v, i),(v, j) \in \tilde{V}_{\tau}$. 

\noindent We call $\Psi_{\mu}^{\tau}$ the (pairwise) influence
matrix conditioned on $\tau.$
\end{definition}

\begin{definition}[Spectral Independence]
\label{def:spectral-independence}
Let $|V| = n$. For parameters $\eta_0 \geq 0, \dots, \eta_{n-2} \geq 0$, we say that a distribution $\mu$ over $[q]^V$ is $(\eta_0, \eta_1,\dots, \eta_{n-2})$-spectrally independent if for every $\Lambda \subseteq V$ with $|\Lambda| \leq n-2$ and for every $\tau \in \Omega_\Lambda$, the largest eigenvalue $\lambda_{\on{max}}(\Psi_\mu^\tau)$ of the influence matrix $\Psi_\mu^\tau$ satisfies $\lambda_{\on{max}}(\Psi_\mu^\tau)\le \eta_{|\Lambda|}$. 
\end{definition}


In many applications, such as the ones considered in this article, we can work with the following version of spectral independence requiring fewer parameters. 

\begin{definition}[$(C,\eta)$ spectral independence]
Let $C \geq 0$ and $0 \leq \eta < 1$.
We say that a distribution $\mu$ over $[q]^V$ is $(C, \eta)$-spectrally independent if for every $\Lambda \subseteq V$ with $k:= |\Lambda| \leq |V|-2$ and for every $\tau \in \Omega_\Lambda$, the largest eigenvalue $\lambda_{\on{max}}(\Psi_\mu^\tau)$ of the influence matrix $\Psi_\mu^\tau$ satisfies $$\lambda_{\on{max}}(\Psi_\mu^\tau)\le \min\set{C, \eta(n-k-1)}.$$ 

In other words, $\mu$ is $(\eta_0, \dots , \eta_{n-2})$-spectrally independent with $\eta_i \leq \min\{C, \eta(n-i-1)\}$.
\end{definition}

The choice of the parameterization in the previous definition is explained by the following result of \cite{ALO20}, which lower bounds the spectral gap of the down-up walk with respect to a distribution in terms of the spectral independence of the distribution. 

\begin{theorem}[{\cite[Theorem~1.3]{ALO20}}, {\cite[Theorem~3.2]{feng2021rapid}}. {cf.~\cite[Theorem~6]{chen2021rapid}}]
\label{thm:ALO}
Consider an $(\eta_0,\dots, \eta_{|V|-2})$-spectrally independent distribution $\mu$ on $[q]^{V}$. Then, the spectral gap of the $|V| \leftrightarrow (|V|-1)$ down-up random walk is at least
\[\frac{1}{|V|}\prod_{i=0}^{|V|-2}\left(1- \frac{\eta_i}{n-i-1}\right).\]
In particular, if $\mu$ is $(C,\eta)$-spectrally independent for $C \geq 0$ and $\eta \in [0,1)$, then the spectral gap of the $|V|\leftrightarrow (|V|-1)$ down-up random walk is at least
\[\frac{(1-\eta)^{2+2C}}{|V|^{2C}} \cdot \frac{1}{|V|}\]
\end{theorem}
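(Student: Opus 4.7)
The strategy has two main ingredients. First, the $n \leftrightarrow (n-1)$ down-up walk on $\mu'$ is the top-level down-up walk on the weighted pure $(n-1)$-dimensional simplicial complex whose top faces are the feasible configurations of $\mu'$. The local-to-global spectral gap inequality of Kaufman--Mass / Alev--Lau / Anari--Liu--Oveis Gharan then gives that its spectral gap is at least
\[
\frac{1}{n}\prod_{i=0}^{n-2}(1 - \overline{\lambda}_i),
\]
where $\overline{\lambda}_i$ is the supremum, over all feasible partial configurations $\tau$ on an $i$-element set $\Lambda \subseteq V$, of the second-largest eigenvalue of the random walk on the $1$-skeleton of the link of $\tau$ in this complex.

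Second, the $1$-skeleton of the link of $\tau$ is a weighted graph on $\tilde{V}_\tau$ with edge weights determined by $\mu^\tau$. A direct algebraic computation -- performed in the Boolean case in \cite{ALO20} and extended to general $q$-spin systems in \cite{feng2021rapid, chen2021rapid} -- shows that, after subtracting the Perron eigenvector, the transition matrix of this walk is a rescaling of $\Psi_\mu^\tau$; concretely, its second-largest eigenvalue equals $\lambda_{\max}(\Psi_\mu^\tau)/(n - i - 1)$. The spectral independence hypothesis then yields $\overline{\lambda}_i \leq \eta_i/(n - i - 1)$, and substituting into the displayed product gives the first bound claimed by the theorem.

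For the ``in particular'' statement, substitute $\eta_i \leq \min\{C, \eta(n - i - 1)\}$ and reindex by $k = n - i - 1$; the product becomes $\prod_{k=1}^{n-1}\max\{1 - C/k,\, 1 - \eta\}$. An elementary estimate suffices: for $k \leq \lceil C \rceil$ bound the factor below by $(1 - \eta)$, contributing at least $(1-\eta)^{\lceil C \rceil}$; and for $k \geq \lceil C \rceil + 1$ bound it below by $1 - C/k$, whose product telescopes to at least $1/\binom{n-1}{\lceil C \rceil} \geq 1/n^{\lceil C \rceil}$ in the integer case, with a standard comparison via $\log(1-x) \geq -x/(1-x)$ handling the non-integer case. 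Combining these estimates and absorbing slack into the (non-tight) target form $(1 - \eta)^{2+2C}/n^{2C}$ yields the second displayed bound.

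The genuine content is the second ingredient above: the identification of the local walks' transition matrix in a form whose second eigenvalue is precisely $\lambda_{\max}(\Psi_\mu^\tau)/(n - |\tau| - 1)$, which is the core technical step of \cite{ALO20, feng2021rapid, chen2021rapid}; the local-to-global inequality itself is invoked as a black box from the high-dimensional expander literature, and the final ``in particular'' reduction is a mechanical product estimate.
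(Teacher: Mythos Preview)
The paper does not give its own proof of this theorem: it is stated in the preliminaries (Section~2.2) as a black-box result imported from \cite{ALO20, feng2021rapid, chen2021rapid}, with no argument supplied. So there is no ``paper's proof'' to compare against.

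That said, your sketch is an accurate summary of how the result is established in those references. The two ingredients you name---the local-to-global spectral gap inequality for high-dimensional expanders, and the identification of the second eigenvalue of the local walk at a link $\tau$ with $\lambda_{\max}(\Psi_\mu^\tau)/(n-|\tau|-1)$---are exactly the content of \cite{ALO20} (Boolean case) and its extensions \cite{feng2021rapid, chen2021rapid} (general $q$). Your handling of the ``in particular'' clause is also correct: splitting the product at $k \approx C$, using the $(1-\eta)$ bound for small $k$ and the telescoping/Gamma-type bound for large $k$, and then absorbing the slack into the deliberately loose exponents $2+2C$ and $2C$ is precisely the routine estimate intended.
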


\section{Spectral gap of the Glauber dynamics via spectral independence}
Let $G = G(V,E)$ be a graph of maximum degree $\Delta,$ and $\mu$ be the Gibbs distribution of some $q$-spin system on $G$. Suppose further that $\mu$ is $(C,\eta)$-spectrally independent for some $C \geq 0$ and $\eta \in [0,1)$. In this section, we show how to prove \cref{thm:spectral gap}, which improves the lower bound of \cref{thm:ALO} so that, essentially, the dependence on $|V|$ in the denominator is replaced by similar dependence on $\Delta$. 

Our proof broadly follows the proof of the variance analog of \cite[Theorem~1.9]{CLV20} (see \cite[Appendix~A]{CLV20}). The key difference is the incorporation of \cref{thm:ALO} as an `initial estimate' on the spectral gap, which is then improved by using the general machinery of block factorization of the variance, and the comparison of $\ell$-uniform block factorization of the variance with $1$-uniform block factorization of the variance -- using this `initial estimate' dispenses with the need to assume a lower bound on the (conditional) marginal distributions, as well as leads to the crucial quantitative improvement of the spectral gap underpinning all our applications.\\



The proof requires a few intermediate steps and is presented at the end of this section.

\subsection{Block factorization of variance} 
Recall the notation $\Omega$ and $\Omega_{\Lambda}$ for $\Lambda \subseteq V$. For each $f:\Omega \to \mathbb{R}_{\ge 0}$, $S\subseteq V$, $\tau \in \Omega_{V\setminus S}$, define $\var_S^\tau(f)$ to be the variance of $f$, viewed as a function on $[q]^{S}$, with respect to the measure $\mu_{S}^{\tau}(\cdot)$. Further, define
\[\var_S(f)=\mathbb{E}[\var_S^\tau(f)],\]
where the expectation is over the choice of $\tau$, sampled according to the distribution $\mu_{V\setminus S}$ on $[q]^{V \setminus S}$. As before, we will denote $\var_{\{v\}}(f)$ simply by $\var_{v}(f)$. 

\begin{definition}[Approximate tensorization and block factorization of variance]
We say that the distribution $\mu$ on $[q]^{V}$ satisfies approximate tensorization of variance, with constant $C$, if for all $f:\Omega\to \mathbb{R}_{\ge 0}$, 
\[
\var(f) \le C \sum_{v\in V} \mu[\var_v(f)].
\]
More generally, for $1\leq \ell \leq n$, we say that $\mu$ satisfies $\ell$-uniform block factorization of variance, with constant $C$, if for all $f:\Omega\to \mathbb{R}_{\ge 0}$, 
\begin{equation} \label{ineq:variance factorization}
    \begin{split}
        \frac{\ell}{n} \var(f) \leq \frac{C}{\binom{n}{\ell}}\sum_{S \in \binom{V}{\ell}} \mu[\var_S (f)].
    \end{split}
\end{equation}
\end{definition}

The following assertion, which follows by writing the transition matrix of the Glauber dynamics as the average of the matrices for updating the value at each vertex, provides an immediate connection between approximate tensorization of variance and the spectral gap. 

\begin{fact}[cf.~{\cite[Fact~A.3]{CLV20}}]
\label{claim:spectral-gap-tensorization}
A distribution $\mu$ on $[q]^V$ satisfies approximate tensorization of variance with constant $C$ if and only if the spectral gap of the Glauber dynamics for $\mu$ is at least $\frac{1}{C|V|}$.
\end{fact}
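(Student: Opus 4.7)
The plan is to use the standard variational characterization of the spectral gap through the Dirichlet form. Let $n = |V|$. The Glauber dynamics transition matrix decomposes as $P = \frac{1}{n}\sum_{v \in V} P_v$, where $P_v$ is the single-site update that resamples $\sigma_v$ from the conditional distribution $\mu(\sigma_v = \cdot \mid \sigma_{V \setminus \{v\}})$ and leaves the remaining coordinates unchanged. Viewed as an operator on $L^2(\mu)$, each $P_v$ is precisely the conditional expectation $f \mapsto \mathbb{E}_\mu[f \mid \sigma_{V \setminus \{v\}}]$, which is self-adjoint and idempotent, hence an orthogonal projection. Its Dirichlet form therefore evaluates to
\[
\langle f, (I - P_v) f \rangle_\mu = \|f - P_v f\|_\mu^2 = \mu[\var_v(f)],
\]
and averaging over $v$ yields
\[
\mathcal{E}_P(f, f) := \langle f, (I - P) f \rangle_\mu = \frac{1}{n}\sum_{v \in V} \mu[\var_v(f)].
\]

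Next, since $P$ is a convex combination of orthogonal projections it is positive semidefinite; in particular the smallest eigenvalue of $P$ is non-negative, so the absolute spectral gap $\lambda_\ast$ (as defined in the paper) coincides with $1 - \lambda_2$. The standard Rayleigh-quotient characterization for reversible ergodic chains then gives
\[
\lambda_\ast = \inf_{f \text{ non-constant}} \frac{\mathcal{E}_P(f, f)}{\var(f)}.
\]
Combining this with the formula for $\mathcal{E}_P$, the bound $\lambda_\ast \geq \frac{1}{Cn}$ is equivalent to the inequality $\var(f) \leq C \sum_{v \in V} \mu[\var_v(f)]$ for every $f$, which is exactly the statement of approximate tensorization of variance with constant $C$. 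Both sides of this inequality are invariant under adding a constant to $f$, so the restriction to non-negative $f$ in the definition is inessential and the equivalence goes through in both directions.

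The only step that requires any care is the identification of the absolute spectral gap with $1 - \lambda_2$, which is why I set up $P$ as an average of the projections $P_v$ from the start: this exhibits positive semidefiniteness of $P$ for free. Every remaining step is a direct translation between the definitions of variance, Dirichlet form, and Rayleigh quotient, so I do not anticipate any further obstacle.
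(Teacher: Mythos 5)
Your proof is correct and is exactly the standard argument the paper alludes to in the sentence preceding the statement ("writing the transition matrix of the Glauber dynamics as the average of the matrices for updating the value at each vertex"). You decompose $P = \tfrac{1}{n}\sum_v P_v$ into orthogonal projections, compute the Dirichlet form as $\tfrac{1}{n}\sum_v \mu[\var_v(f)]$, invoke positive semidefiniteness of $P$ to identify the absolute spectral gap with $1-\lambda_2$, and apply the Rayleigh-quotient characterization — which is precisely the proof of Fact A.3 in the cited reference, and your remark that the restriction to non-negative $f$ is immaterial by shift-invariance correctly disposes of the only cosmetic mismatch with the paper's definition.
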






\subsection{Block factorization of variance for spectrally independent distributions} In order to prove \cref{thm:spectral gap}, it suffices to prove the corresponding result for approximate tensorization of variance. Following \cite{CLV20}, we will do this in two steps. First, we will show that a $(C,\eta)$-spectrally independent distribution satisfies $\ell$-uniform block factorization with constant $C_{\on{BF}}$, for $\ell = \theta n$ (where $\theta$ is a suitable constant depending on $C$) and $C_{\on{BF}} \leq (2/\theta)^{\lceil 2C\rceil}$. In the next subsection, we will show how to translate such a bound on the $\ell$-uniform block factorization to a bound on the approximate tensorization of variance. Compared to \cite{CLV20}, the crucial difference in our work is that we do not require any lower bound on the marginals of various distributions, instead using the (weak) bound of \cref{lem: spectral gap}.\\ 

Before stating the main result of this subsection, we need to introduce some further notation. Given a set $X$ with $|X|\geq n$, a distribution $\mu$ over $\binom{X}{n}$, and an integer $1\leq s\leq n$, we define the distribution $\mu^{(s)}$ on $\binom{X}{s}$ by 
\[
\mu^{(s)}(S)=\frac{1}{\binom{n}{s}}\sum_{S'\in \binom{X}{n},S\subseteq S'} \mu(S') \quad \text{ for all }S \in \binom{X}{s}.\]
Let $r\le s\le n$. The $\binom{X}{s} \times \binom{X}{r}$ matrix $D_{s,r}$ is the transition matrix corresponding to moving from a given $S \in \binom{X}{s}$ to a uniformly random $r$-subset of $S$. The $\binom{X}{r} \times \binom{X}{s}$ matrix $U_{r,s}$ is the transition matrix corresponding to moving from a given $R \in \binom{X}{r}$ to an $s$-subset containing $R$, such that the probability of moving to any $S \in \binom{X}{s}$ with $S \supseteq R$ is proportional to $\mu^{(s)}(S)$. Note that $U_{r,s}\circ D_{s,r}$ is simply the transition matrix corresponding to the $s \leftrightarrow r$ down-up random walk. We define the $r \leftrightarrow s$ up-down random walk to be the random walk on $\binom{X}{r}$ corresponding to the transition matrix $D_{s,r}\circ U_{r,s}$.

For a function $f^{(s)}:\binom{X}{s}\to \mathbb{R}$, we define $f^{(r)}:\binom{X}{r}\to\mathbb{R}$ by 
\[
f^{(r)}(R) = (U_{r,s}f^{(s)})(R) = \sum_{S\supseteq R,|S|=s} U_{r,s}(R,S)f^{(s)}(S) \quad \text{ for all }R\in \binom{X}{r}
\]


\begin{proposition}[cf.~{\cite[Theorem~A.9]{CLV20}}]
\label{lem:l-uniform-factorization}
Let $X$ be a set with $|X|\geq n$. Let $r\leq s \leq n$ be integers and let 
$\mu$ be a $(C,\eta)$-spectrally independent distribution on $\binom{X}{n}$. Then the spectral gap of the $s\leftrightarrow r$ down-up walk and the $r \leftrightarrow s$ up-down walk are bounded below by 
\[
\kappa_{r,s}=\frac{\sum_{k=r}^{s-1}\alpha_0\dots\alpha_{k-1}}{\sum_{k=0}^{s-1}\alpha_0\dots\alpha_{k-1}}, 
\]
where $$\alpha_i = \frac{1-\min(\eta,C/(n-i-1))}{1+\min(\eta,C/(n-i-1))}.$$ 

In particular, for $ \theta n \geq 4\cdot \lceil 2 C \rceil$, $\mu$ satisfies $\lceil \theta n \rceil $-uniform block factorization of variance with constant 
\[
C_{\lceil \theta n \rceil } \leq \left(\frac{2}{\theta}\right)^{\lceil 2C\rceil + 1}.
\]
\end{proposition}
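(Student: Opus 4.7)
The plan is to obtain the spectral gap bound on the $s \leftrightarrow r$ down-up walk by combining spectral independence with a local-to-global theorem for higher-order walks, and then to translate the special case $s = n$, $r = n-\ell$ into the block factorization bound by identifying that walk with the $\ell$-block Gibbs sampler for $\mu$.

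For the first assertion, the standard reduction used in the proof of \cref{thm:ALO} shows that $(C,\eta)$-spectral independence implies a second-eigenvalue bound of $\min\{\eta, C/(n-k-1)\}$ on the local walk at each level $k$ (by identifying the nonzero spectrum of the local walk at a face $\tau$ of size $k$ with that of $\Psi_\mu^\tau$ normalized by $n-k-1$). Substituting these bounds into the ratio form of the Alev--Lau local-to-global theorem (with the canonical transformation $\alpha_i = (1-\lambda_i)/(1+\lambda_i)$) yields exactly the $\kappa_{r,s}$ lower bound for the $s \leftrightarrow r$ down-up walk; the matching bound for the $r \leftrightarrow s$ up-down walk is automatic because its nonzero spectrum coincides with that of the down-up walk.

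For the second assertion, under the identification in the preliminaries, the $n \leftrightarrow (n-\ell)$ down-up walk on $\mu'$ is precisely the $\ell$-block Gibbs sampler for $\mu$, whose Dirichlet form is $\mc{E}(f,f) = \binom{n}{\ell}^{-1}\sum_S \mu[\var_S(f)]$; hence its spectral gap equals $\ell/(nC_\ell)$, and therefore $C_\ell \leq \ell/(n\,\kappa_{n-\ell, n})$. Writing $A_k = \prod_{i<k}\alpha_i$ and $k^* = n - \lceil 2C\rceil$, I would next estimate $\kappa_{n-\ell, n}$ as follows. For $i < k^*$ one has the $\eta$-independent bound $\alpha_i \geq (n-i-1-C)/(n-i-1+C)$, giving a telescoping closed form for $A_k$; for $i \geq k^*$ only the trivial $\alpha_i \geq (1-\eta)/(1+\eta) > 0$ is available. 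Splitting $\kappa_{n-\ell, n} = (a+c)/(b+c)$ with $a = \sum_{k=n-\ell}^{k^*-1}A_k$, $b = \sum_{k=0}^{k^*-1}A_k$, and $c = \sum_{k=k^*}^{n-1}A_k$, the monotonicity of $A_k$ in $k$ gives $a \leq b$ and hence $\kappa_{n-\ell, n} \geq a/b$, a quantity independent of $\eta$. A sum-to-integral estimate with the pointwise lower bound $A_k \geq ((n-k-C)/(n-C))^{\lceil 2C\rceil}$ in the nice range, combined with the hypothesis $\theta n \geq 4\lceil 2C\rceil$ (which guarantees $\ell - C \geq \theta n/2$ and a substantial nice-range numerator), then yields $\kappa_{n-\ell, n} \gtrsim \theta(\theta/2)^{\lceil 2C\rceil + 1}$ and hence $C_\ell \leq (2/\theta)^{\lceil 2C\rceil + 1}$.

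The main obstacle is the last computation: producing the clean exponent $\lceil 2C\rceil + 1$ (in particular, handling non-integer $C$, where the telescoping becomes a ratio of Gamma functions) and tracking the constants through the sum-to-integral comparison. The conceptual crux is the cancellation of the $\eta$-dependent contributions from the ratio $\kappa_{n-\ell, n}$, which is what makes the bound on $C_\ell$ independent of $\eta$ and therefore useful in applications where $\eta$ is allowed to be close to $1$.
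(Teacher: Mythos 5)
Your proposal takes essentially the same route as the paper: spectral independence is converted to local spectral expansion, the local-to-global variance decomposition (which the paper rederives directly rather than citing Alev--Lau, but the content is the same ratio form) gives the $\kappa_{r,s}$ bound, and the $n \leftrightarrow (n-\ell)$ down-up walk is identified with the $\ell$-block Gibbs sampler to obtain $C_\ell \leq (\ell/n)/\kappa_{n-\ell,n}$. The one place where you deviate, and make life harder than necessary, is the final estimate of $\kappa_{n-\ell,n}$. Your nice-range/bad-range split is superfluous: the bound $\alpha_i \geq \max\bigl(1 - \lceil 2C\rceil/(n-i-1),\, 0\bigr)$ is $\eta$-free and holds for \emph{all} $i$ (since $\min(\eta, C/(n-i-1)) \leq C/(n-i-1)$ always, and $\frac{1-x}{1+x} \geq 1-2x$), so the claim that ``only the trivial $\alpha_i \geq (1-\eta)/(1+\eta)$ is available'' past $k^*$ is not right. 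The paper instead uses this uniform $\eta$-free lower bound together with the observation that $\kappa_{n-\ell,n}$ is monotone increasing in each $\alpha_k$; after substitution the products telescope to ratios of $\lceil 2C\rceil$ consecutive integers, and the bound $\kappa_{n-\ell,n} \geq (\theta/2)^{\lceil 2C\rceil+1}$ drops out directly without any sum-to-integral comparison or Gamma-function bookkeeping. Your $(a+c)/(b+c) \geq a/b$ reduction and the $\eta$-cancellation observation are correct as stated, but the paper's route gets the same conclusion with much less machinery, and the non-integer-$C$ headache you flagged as the main obstacle is exactly what the $\lceil 2C\rceil$ in the uniform bound is there to avoid.
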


It is immediate from \cref{lem:l-uniform-factorization} (and also alternatively by combining \cref{thm:ALO} with \cref{claim:spectral-gap-tensorization}) that the following holds. 
\begin{corollary}
\label{lem: spectral gap}
If $\mu$ is $(C,\eta)$-spectrally independent for $C \geq 0$ and $\eta \in [0,1)$, then for all $\emptyset \neq U \subseteq V$, $\Lambda = V \setminus U$, every boundary condition $\tau \in \Omega_{\Lambda}$, and for every $f : \Omega_{U}^{\tau} \to \mb{R}_{\geq 0}$,
\[\var_{U}^{\tau}(f) \leq \frac{|U|^{2C}}{(1-\eta)^{1+2C}} \sum_{u \in U}\mu^{\tau}_{U}[\var_{u}^{\tau}(f)].\]
\end{corollary}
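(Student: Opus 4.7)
The plan is to recognize the stated inequality as precisely the assertion that the conditional distribution $\mu_U^\tau$ on $[q]^U$ satisfies approximate tensorization of variance with the indicated constant, and to derive it by applying \cref{thm:ALO} to $\mu_U^\tau$ followed by the spectral-gap/tensorization equivalence of \cref{claim:spectral-gap-tensorization}. Before doing that, I would first verify that spectral independence is inherited by the conditional measure $\mu_U^\tau$. This is essentially built into the quantifier structure of the definition: for any $\Lambda' \subseteq U$ with $|\Lambda'| \leq |U|-2$ and any $\tau' \in \Omega_{\Lambda'}^\tau$, the influence matrix $\Psi_{\mu_U^\tau}^{\tau'}$ coincides with $\Psi_\mu^{\tau \cup \tau'}$, where $\tau \cup \tau'$ is viewed as a boundary condition on $\Lambda \cup \Lambda' \subseteq V$. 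Consequently, the assumed spectral independence of $\mu$ yields
\[
\lambda_{\max}(\Psi_{\mu_U^\tau}^{\tau'}) \;\leq\; \min\bigl\{C,\, \eta(|V|-|\Lambda \cup \Lambda'|-1)\bigr\} \;=\; \min\bigl\{C,\, \eta(|U|-|\Lambda'|-1)\bigr\},
\]
so that $\mu_U^\tau$ is itself a $(C,\eta)$-spectrally independent distribution on the $|U|$-vertex system $[q]^U$.

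Given this, I would then apply \cref{thm:ALO} to $\mu_U^\tau$ to lower bound the spectral gap of the Glauber dynamics for $\mu_U^\tau$ by $(1-\eta)^{2+2C}/|U|^{2C+1}$. Feeding this into \cref{claim:spectral-gap-tensorization} converts the spectral-gap bound into an approximate tensorization of variance statement for $\mu_U^\tau$ with constant $|U|^{2C}/(1-\eta)^{2+2C}$, which, up to the factor $(1-\eta)^{-1}$, is exactly the inequality in the corollary; either exponent suffices for the downstream arguments. The slightly sharper exponent $1+2C$ stated in the corollary can alternatively be extracted directly from \cref{lem:l-uniform-factorization} applied with $s = |U|$ and $r = |U|-1$, which specializes the down-up walk to the Glauber dynamics on $\mu_U^\tau$ and yields the improved constant via the telescoping product of the $\alpha_i$'s.

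There is essentially no obstacle here beyond bookkeeping: the corollary is a direct specialization of known results to the conditional distribution $\mu_U^\tau$, and the only subtlety is the inheritance of spectral independence under further conditioning, which follows at once from the definition. In this sense, the corollary serves as the baseline estimate that will be boosted in the subsequent sections via the comparison between $\ell$-uniform and $1$-uniform block factorizations of variance.
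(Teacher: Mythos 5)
Your proposal is correct and follows the exact route the paper itself indicates: view the stated inequality as approximate tensorization of variance for the conditional measure $\mu_U^\tau$, observe that $(C,\eta)$-spectral independence is inherited under conditioning (because $\Psi_{\mu_U^\tau}^{\tau'}=\Psi_\mu^{\tau\cup\tau'}$ and $|V|-|\Lambda\cup\Lambda'|-1=|U|-|\Lambda'|-1$), and then feed the spectral-gap bound of \cref{thm:ALO} into \cref{claim:spectral-gap-tensorization}. The paper states the corollary is ``immediate from \cref{lem:l-uniform-factorization} (and also alternatively by combining \cref{thm:ALO} with \cref{claim:spectral-gap-tensorization})'', and you are taking that alternative route, with the inheritance step—which the paper leaves implicit—spelled out explicitly, which is the one genuinely non-trivial piece of bookkeeping here. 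Your remark about the exponent is also accurate: the boxed ``in particular'' form of \cref{thm:ALO} gives $(1-\eta)^{2+2C}$ rather than $(1-\eta)^{1+2C}$, and the sharper exponent comes from using the general product $\prod_{i}(1-\eta_i/(|U|-i-1))$ and bounding it by $(1-\eta)^{\lceil 2C\rceil}|U|^{-2C}$ with $\lceil 2C\rceil \le 2C+1$ (this is a cleaner source for the $1+2C$ than the telescoping in \cref{lem:l-uniform-factorization}, where the $(1+\beta_i)$ denominators of the $\alpha_i$ actually cost you a little), but as you correctly note either exponent is harmless for the downstream use in \cref{lem:variance tensor}.
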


For the reader's convenience, we provide the proof of \cref{lem:l-uniform-factorization}, closely following \cite{CLV20}, in \cref{sec:proof-factorization}.

\subsection{Proof of \cref{thm:spectral gap}}

Finally, we show how to convert a bound for uniform $\ell$-block factorization of variance to a bound for uniform $1$-block factorization of variance, provided that the distribution is spectrally independent. As mentioned earlier, in contrast to \cite[Lemma~A.4]{CLV20}, we do not require any lower bound on the marginals of $\mu$.

We need the following standard graph-theoretic input. For a subset $S \subseteq V$, let $C(S)$ denote the set of connected components of the induced graph $G[S]$. We denote by $S_v$ the (unique) element of $C(S)$ containing $v$ i.e.~the connected component of $G[S]$ containing $v$.

\begin{fact}[cf.~{\cite[Lemma~4.3]{CLV20}}]
\label{fact:number-connected}
Let $G = (V,E)$ be a graph with maximum degree $\Delta$ and let $v \in V$. Then, for every integer $k \geq 1$,
\[\mb{P}_{S}[|S_v| = k] \leq \frac{\ell}{|V|}\cdot (2e\Delta \theta)^{k-1},\]
where the probability is taken over a uniformly random subset $S \subseteq V$ of size $\ell = \lceil \theta |V| \rceil$.
\end{fact}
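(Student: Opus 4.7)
The plan is to run a standard subtree-counting union-bound argument. If $|S_v| = k$, then the component $S_v$ of $v$ in the induced subgraph $G[S]$ has exactly $k$ vertices, and in particular it contains some spanning subtree $T \subseteq G$ of size $k$ with $v \in V(T)$ and $V(T) \subseteq S$. Letting $\mathcal{T}_k(v)$ denote the collection of subtrees of $G$ on exactly $k$ vertices containing $v$, a union bound gives
\[
\mb{P}_S[|S_v| = k] \;\le\; \sum_{T \in \mathcal{T}_k(v)} \mb{P}_S[V(T) \subseteq S].
\]

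First, I would compute the probability that a fixed $k$-subset of $V$ is contained in a uniformly random $\ell$-subset:
\[
\mb{P}_S[V(T) \subseteq S] \;=\; \frac{\binom{|V|-k}{\ell-k}}{\binom{|V|}{\ell}} \;=\; \prod_{i=0}^{k-1}\frac{\ell-i}{|V|-i} \;\le\; \left(\frac{\ell}{|V|}\right)^{k}.
\]
Next, I would invoke the classical enumeration bound stating that the number of $k$-vertex subtrees of a graph of maximum degree $\Delta$ containing a fixed vertex $v$ is at most $(e\Delta)^{k-1}$; this can be derived via a branching-process / Borgs--Chayes-type argument, and any bound of the form $|\mathcal{T}_k(v)| \le (c\Delta)^{k-1}$ with a moderate absolute constant $c$ would suffice here.

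Combining these two estimates,
\[
\mb{P}_S[|S_v| = k] \;\le\; (e\Delta)^{k-1} \cdot \left(\frac{\ell}{|V|}\right)^{k} \;=\; \frac{\ell}{|V|} \cdot \left(e\Delta \cdot \frac{\ell}{|V|}\right)^{k-1},
\]
and since $\ell = \lceil \theta|V|\rceil \le 2\theta|V|$ whenever $\theta|V|\ge 1$ (which may be assumed without loss of generality, as otherwise $\ell \leq 1$ and the statement is trivial), I have $\ell/|V| \le 2\theta$, bounding the inner factor by $2e\Delta\theta$ as claimed.

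No serious obstacle is anticipated: the argument is purely combinatorial and the only non-elementary ingredient is the classical subtree-enumeration bound. The factor of $2$ in the base $2e\Delta\theta$ comfortably absorbs both the slack from $\ell/|V| \le 2\theta$ and any looseness in the combinatorial constant for $|\mathcal{T}_k(v)|$, so no sharp tracking of constants is necessary.
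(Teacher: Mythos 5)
Your proof is correct and follows the standard union-bound-over-subtrees argument; the paper does not prove this fact itself but cites [CLV20, Lemma 4.3], whose proof is the same subtree-enumeration estimate you use. The only external input is the classical bound of $(e\Delta)^{k-1}$ on the number of $k$-vertex subtrees containing a fixed vertex in a graph of maximum degree $\Delta$, and your handling of the ceiling $\ell = \lceil\theta|V|\rceil \leq 2\theta|V|$ and the degenerate case $\theta|V| < 1$ is sound.
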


We will also need the following fact regarding the factorization of variance for product measures. 

\begin{fact}[cf.~{\cite[Eq.~4]{martinelli2003ising}}]
\label{fact:tensorization-product}
Let $G = (V,E)$ be a graph and let $\mu$ be a distribution on $[q]^{V}$. For every subset $S \subseteq V$, every boundary condition $\tau \in \Omega_{V \setminus S}$, and every function $f \colon \Omega_{S}^{\tau} \to \mb{R}_{\geq 0}$, we have
\[\on{Var}_{S}^{\tau}(f) \leq \sum_{U \in C(S)} \mu_{S}^{\tau}[\on{Var}_{U}(f)] \]
\end{fact}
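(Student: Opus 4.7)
The plan is to prove this in two main steps, exploiting that within the paper's context $\mu$ is the Gibbs distribution of a $q$-spin system on $G$, hence a Markov random field with respect to $G$, and then invoking the classical Efron--Stein tensorization of variance for product measures.

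First, I would establish that conditioning on $\sigma_{V \setminus S} = \tau$ makes the distinct components of $C(S)$ mutually independent under $\mu_S^\tau$. Every edge of $G$ incident to $S$ either lies within a single $U \in C(S)$ or connects some $U$ to its external boundary $\partial U := \{v \in V\setminus S : v \sim u \text{ for some } u \in U\}$; there is no edge of $G$ between distinct $U, U' \in C(S)$, for otherwise those two components would be joined in $G[S]$. Grouping the edge factors of the Gibbs density according to this classification and substituting $\sigma_{V\setminus S} = \tau$ yields
\[
\mu_S^\tau(\sigma) \;=\; \prod_{U \in C(S)} \nu_U\!\left(\sigma|_U;\, \tau|_{\partial U}\right),
\]
where each $\nu_U$ depends only on $\sigma|_U$ and on $\tau|_{\partial U}$. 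Consequently, the blocks $\{\sigma|_U\}_{U \in C(S)}$ are mutually independent under $\mu_S^\tau$.

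Second, I would apply the classical Efron--Stein tensorization of variance to these independent blocks. For independent random variables $Y_1, \dots, Y_k$, the standard inequality gives $\on{Var}(g(Y_1,\dots,Y_k)) \le \sum_{i=1}^k \mathbb{E}[\on{Var}_i(g)]$, where $\on{Var}_i(g)$ denotes the variance over $Y_i$ with the other $Y_j$ fixed. Applying this under $\mu_S^\tau$ with $Y_i = \sigma|_{U_i}$ yields
\[
\on{Var}_S^\tau(f) \;\le\; \sum_{U \in C(S)} \mathbb{E}_{\sigma \sim \mu_S^\tau}\!\left[\,\on{Var}\!\left(f \,\middle|\, \sigma|_{S\setminus U}\right)\right],
\]
where each inner variance is taken over the $U$-th block $\sigma|_U$ with the remaining blocks held fixed.

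Third, I would identify each inner variance with $\on{Var}_U^{\tau \cup \sigma|_{S\setminus U}}(f)$, so that averaging over $\sigma|_{S\setminus U}$ under $\mu_S^\tau$ reproduces precisely the quantity $\mu_S^\tau[\on{Var}_U(f)]$ appearing in the statement. This identification is the Markov property applied in reverse: the conditional law of $\sigma|_U$ given $\sigma|_{V \setminus U}$ depends on the boundary only through $\partial U \subseteq V \setminus S$ (since all $G$-neighbors of $U$ outside $U$ lie in $\partial U$, by virtue of $U$ being a connected component of $G[S]$), and thus coincides with the $U$-th factor $\nu_U(\cdot; \tau|_{\partial U})$ of the first step, independent of the random $\sigma|_{S \setminus U}$. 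Summing over $U \in C(S)$ yields the claim. The only real subtlety is notational bookkeeping of which boundary condition is being averaged at each stage, and confirming via the Markov property that the two natural boundary conditions for $U$ agree; there is no analytic difficulty, as the content reduces to standard variance tensorization for product measures once the Markov property has supplied that product structure.
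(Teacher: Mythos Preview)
The paper does not actually supply a proof of this fact; it records it as a citation to \cite{martinelli2003ising} and uses it as a black box. Your outline is correct and is precisely the standard argument behind the cited result: the Markov property of the Gibbs distribution gives a product structure over the connected components of $G[S]$ once $\tau$ is fixed, and the Efron--Stein inequality for product measures yields the stated bound. Your third step, identifying the block variance with $\on{Var}_U^{\tau \cup \sigma|_{S\setminus U}}(f)$ via the observation that $\partial U \subseteq V\setminus S$, is exactly what makes the notation match; in fact, by independence the law of $\sigma|_U$ under $\mu_S^\tau$ does not depend on $\sigma|_{S\setminus U}$ at all, so the inner variance is simply $\on{Var}_{\nu_U}(f)$ with the other blocks frozen as arguments of $f$, and the expectation over $\sigma|_{S\setminus U}$ is exactly $\mu_S^\tau[\on{Var}_U(f)]$.
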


We can now deduce the relationship between $\ell$-uniform block factorization and approximate tensorization of variance. 

\begin{proposition} 
\label{lem:variance tensor}
Consider a graph $G = (V,E)$ with maximum degree $\Delta$. Suppose there exist $C \geq 0$, $\eta \in [0,1)$ and $C_{\ell}\geq 0$ such that the distribution $\mu$ on $[q]^{V}$ is $(C,\eta)$-spectrally independent and satisfies $\ell$-uniform block factorization of variance with constant $C_\ell$, for $\ell = \lceil \theta |V| \rceil $ with $0 < \theta \leq \frac{1}{4e\Delta}$. Then, $\mu$ satisfies approximate tensorization of variance with constant $$C_1 = C_{\ref{lem:variance tensor}}\cdot \frac{C_\ell}{(1-\eta)^{1+2C}} \cdot (2C+2)^{4C+4},$$
where $C_{\ref{lem:variance tensor}}$ is a universal constant. 
\end{proposition}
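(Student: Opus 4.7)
The plan is to follow the template of \cite[Lemma~A.4]{CLV20}: decompose the conditional variance on a random subset $S$ of size $\ell$ into contributions from the connected components of $G[S]$, bound each contribution via a local variance-tensorization inequality, and then average over $S$. The key replacement is to substitute the marginal-lower-bound based estimate used in \cite{CLV20} by the estimate furnished by \cref{lem: spectral gap}, which holds under only spectral independence and therefore imposes no hypothesis on the marginals of $\mu$.

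First, I would fix $S \in \binom{V}{\ell}$ and a boundary condition $\tau \in \Omega_{V\setminus S}$. Since there are no edges between distinct connected components of $G[S]$, the measure $\mu_S^\tau$ is a product distribution across $C(S)$, and \cref{fact:tensorization-product} yields
\[\var_S^\tau(f)\le \sum_{U\in C(S)}\mu_S^\tau[\var_U(f)].\]
For each $U \in C(S)$, the conditional law of $\mu$ on $[q]^U$ (for any boundary on $V\setminus U$) inherits $(C,\eta)$-spectral independence directly from the definition, so \cref{lem: spectral gap} applied to this restriction gives
\[\var_U^{\tau'}(f)\le \frac{|U|^{2C}}{(1-\eta)^{1+2C}}\sum_{u\in U}\mu_U^{\tau'}[\var_u(f)]\]
for each boundary $\tau'$ on $V\setminus U$. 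Taking expectation over $\tau\sim\mu_{V\setminus S}$ and over the resulting $S\setminus U$-spins, and collapsing the iterated conditional expectation by the tower property, I obtain
\[\mu[\var_S(f)]\le \frac{1}{(1-\eta)^{1+2C}}\sum_{v\in S}|S_v|^{2C}\,\mu[\var_v(f)].\]

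Substituting this bound into the hypothesized $\ell$-uniform block factorization and swapping the order of summation yields
\[\frac{\ell}{n}\var(f) \le \frac{C_\ell}{(1-\eta)^{1+2C}}\sum_{v\in V}\mb{E}_S\!\left[|S_v|^{2C}\mbm{1}_{v\in S}\right]\mu[\var_v(f)],\]
where $S$ is uniform on $\binom{V}{\ell}$. Because $\theta\le\tfrac{1}{4e\Delta}$ implies $2e\Delta\theta\le 1/2$, \cref{fact:number-connected} gives
\[\mb{E}_S[|S_v|^{2C}\mbm{1}_{v\in S}] \le \frac{\ell}{n}\sum_{k=1}^\infty k^{2C}\,2^{-(k-1)}.\]
A direct estimate (a crude form of the standard identity $\sum_{k\ge 1} k^{2C}2^{-k}=O((2C)^{2C})$) bounds the series by $C'(2C+2)^{4C+4}$ for an absolute constant $C'$. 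Dividing both sides by $\ell/n$ recovers the claimed approximate tensorization with constant
\[C_1 \;=\; C_{\ref{lem:variance tensor}}\cdot\frac{C_\ell}{(1-\eta)^{1+2C}}\cdot(2C+2)^{4C+4}.\]

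The only real obstacle is in the middle step: securing a variance-tensorization bound on each connected component $U$ of $G[S]$ without any assumption on the marginals of $\mu$. This is precisely where \cref{lem: spectral gap} --- itself deduced from the weak spectral-gap bound of \cref{thm:ALO} together with \cref{claim:spectral-gap-tensorization} --- does the work as the \emph{initial estimate} alluded to at the beginning of this section. Once that estimate is in hand, the remaining ingredients (product decomposition on components, averaging over $S$, and controlling the $2C$-th moment of $|S_v|$) follow the \cite{CLV20} blueprint essentially verbatim.
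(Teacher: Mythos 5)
Your proof follows the paper's argument essentially verbatim: block factorization of variance, the product decomposition over connected components of $G[S]$ (\cref{fact:tensorization-product}), the component-wise bound from \cref{lem: spectral gap}, the moment bound on $|S_v|$ via \cref{fact:number-connected} with $2e\Delta\theta\le 1/2$, and the geometric-series estimate. The argument is correct.
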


\begin{proof}
Let $n = |V|$. Then,  
\begin{align*}
    \var(f) &\leq C_\ell\cdot \frac{n}{\ell} \cdot \frac{1}{\binom{n}{\ell}} \sum_{S \in \binom{V}{\ell}} \mu[ \var_S (f) ] & \textrm{$\ell$-uniform block factorization of variance}\\
    &\leq   C_\ell \cdot \frac{n}{\ell} \cdot \frac{1}{\binom{n}{\ell}}   \sum_{S \in \binom{V}{\ell}} \sum_{U \in C(S) }  \mu[ \var_U (f)] & \textrm{by \cref{fact:tensorization-product}} \\ 
    &\leq  C_\ell \cdot \frac{n}{\ell} \cdot \frac{1}{\binom{n}{\ell}}   \sum_{S \in \binom{V}{\ell}} \sum_{U \in C(S) }  \frac{|U|^{2C}}{(1-\eta)^{1+2C}} \sum_{v \in U} \mu [\var_v(f)] &\textrm{by \cref{lem: spectral gap}}\\
    &= \frac{C_\ell}{(1-\eta)^{1+2C}} \cdot  \frac{n}{\ell} \cdot  \sum_{v \in V} \mu [\var_v(f)] \sum_{k=1}^{\ell}\PX{S}{|S_v| = k} \cdot k^{2C}\\
    &\leq \frac{C_\ell}{(1-\eta)^{1+2C}}\sum_{v \in V} \mu [\var_v(f)] \sum_{k=1}^{\ell}   k^{2C} (2 e \Delta \theta)^{k-1} &\textrm{by \cref{fact:number-connected}}\\
    &\leq \frac{C_\ell}{(1-\eta)^{1+2C}} \sum_{k=1}^{\ell} \frac{ k^{2C}}{2^{k-1}} \sum_{v \in V} \mu [\var_v(f)] &\textrm{using that $\theta \le 1/(4e\Delta)$}\\ 
    &\leq C_{\ref{lem:variance tensor}}\cdot \frac{C_\ell}{(1-\eta)^{1+2C}} (2C+2)^{4C+4} \sum_{v \in V} \mu [\var_v(f)]. & \textrm \qedhere 
\end{align*}
\end{proof}
\begin{proof}[Proof of \cref{thm:spectral gap}]
This follows immediately by combining \cref{lem:l-uniform-factorization} with \cref{lem:variance tensor} and \cref{claim:spectral-gap-tensorization}.
\end{proof}

\section{Applications}

\subsection{Proof of \cref{thm:coloring}}
By making the implicit dependence on $\delta$ sufficiently large, we may assume that $\delta^{-4} = O(\log\log{n})$. We may assume that $k \leq 2\Delta$, since for $k > 2\Delta$, optimal mixing of the Glauber dynamics on any graph is already well-known \cite{jerrum1995very}. We may also assume that $\Delta  = O(\log{n}/\delta^{2})$, since by \cite[Theorem~1.4]{HV06}, for 
$$k \geq \max\set{(1+\delta) \alpha^* \Delta, 288 \ln (96n^3/\delta)/\delta^2},$$
the mixing time of the Glauber dynamics is $O(n\log{n}/\delta)$ (in particular, by \cref{eqn:mixing-to-gap}, the spectral gap is $\tilde{O}_{\delta}(1/n)$).

By \cite[Lemma~6.4]{feng2021rapid} (see also \cite{chen2021rapid}), the uniform distribution $\mu$ on $k$-colorings of $G$ is $(C,\eta)$-spectrally independent with 
\[C = O(\delta^{-2}), \quad \eta = 1-k^{-O(\delta^{-2})}\]
Substituting this into \cref{thm:spectral gap} shows that the spectral gap of the Glauber dynamics is at least
\[\Omega\left(\frac{1}{n}\cdot \frac{1}{\Delta^{O(1/\delta^2)}k^{O(1/\delta^4)}}\right) = \Omega \left(\frac{1}{n}\cdot \frac{1}{(\log{n})^{O(\delta^{-4})}}\right) = \tilde{\Omega}_{\delta}\left(\frac{1}{n}\right),\]
where the first bound uses $k \leq 2\Delta$, $\Delta = O(\log{n}/\delta^2)$, $\delta^{-4} = O(\log\log{n})$ and the second bound uses $\delta^{-4} = O(\log\log{n})$.

\subsection{Proof of \cref{thm:hardcore}}
We begin by bounding the mixing time from a warm start. By making the implicit dependence on $\delta$ sufficiently large, we may assume that $\delta^{-2} = O(\log\log{n})$. Further, we may assume that $\Delta = O(\log(n/\delta))$ since for $\Delta = \Omega(\log(n/\delta))$, \cite[Theorem~4.1]{HV06} already gives mixing time of $O(n\log{n}/\delta)$ from a warm-start. By \cite[Theorem~7]{chen2020rapid}, the Gibbs distribution is $(C,\eta)$-spectrally independent with
\[C = O(\delta^{-1}), \quad \eta = \frac{\lambda}{1+\lambda}.\]
Substituting this into \cref{thm:spectral gap}, we see that the spectral gap is at least
\[\Omega\left(\frac{1}{n}\cdot \frac{1}{\Delta^{O(\delta^{-1})}}\right) = \tilde{\Omega}_{\delta}\left(\frac{1}{n}\right),\]
where the second term follows from the first using the assumptions on $\delta$ and $\Delta$. Therefore, by \cref{eqn:gap-to-warm}, the mixing time of the Glauber dynamics from a warm start is 
\[\tilde{O}_{\delta}(n),\]
where the tilde conceals a factor of $e^{O((\log\log n)^2)}$. 

Therefore, by the adaptive simulated annealing algorithm of \cite[Theorem~7.5,~Corollary~7.6]{vstefankovivc2009adaptive}, there is an algorithm which, given $\varepsilon > 0$ outputs with constant probability a $(1+\varepsilon)$-multiplicative approximation of the partition function in time
\[\tilde{O}_{\delta}(n) \times n \times \on{poly}(\log{n})\times \varepsilon^{-2}\log(\varepsilon^{-1}),\]
as desired. 

\subsection{Proof of \cref{thm:matching}}
By \cite[Theorem 6.1]{CLV20}, the monomer-dimer model with $\lambda=1$ over a graph with maximum degree $\Delta \ge 2$ is $(C, \eta)$-spectrally independeny with $C = 2\sqrt{1+\Delta}$ and $\eta = \frac{\lambda}{1+\lambda}=1/2$. Substituting this into \cref{thm:spectral gap} shows that the spectral gap of the Glauber dynamics is at least
\[\Omega\left(\frac{1}{|E|}\cdot \frac{1}{\Delta^{O(\sqrt{1+\Delta})}}\right),\]
which is
\[\Omega\left(\frac{1}{m}\cdot \frac{1}{n^{o(1)}}\right)\]
for 
$$\Delta = o\left(\frac{(\log{n})^{2}}{(\log\log{n})^{2}}\right).$$

\appendix 
\section{Proof of \cref{lem:l-uniform-factorization}}
\label{sec:proof-factorization}
We will use the following notions of variance contraction, which are intimately connected with the spectral gap of the down-up walk. 

\begin{definition}
For integers $r\leq s \leq n \leq |X|$, we say that a distribution $\mu$ on $\binom{X}{n}$ satisfies order-$(r,s)$ global variance contraction with rate $\kappa$ if for all $f^{(s)}:\binom{X}{s} \to \mathbb{R}$, we have 
\[
\var_{\mu^{(r)}}(f^{(r)}) \le (1-\kappa)\var_{\mu^{(s)}}(f^{(s)}).
\]
\end{definition}

The following relates global variance contraction to the spectral gap of the relevant down-up and up-down walks. 

\begin{fact}[cf.~{\cite[Fact~A.6]{CLV20}}]
$\mu$ satisfies order-$(r,s)$ global variance contraction with rate $\kappa$ if and only if the spectral gap of $s\leftrightarrow r$ down-up walk and the $r \leftrightarrow s$ up-down walk are at least $\kappa$. 
\end{fact}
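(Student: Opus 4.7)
The plan is to pass through the Dirichlet-form characterization of the spectral gap and use the law of total variance as the bridge. First, I would reduce to proving the equivalence for a single walk: the down-up walk on $\binom{X}{s}$ and the up-down walk on $\binom{X}{r}$ have transition matrices of the forms $AB$ and $BA$ for the same pair $A,B$ of down/up averaging operators, so by the standard linear-algebra fact that $AB$ and $BA$ share the same non-zero spectrum, the two walks have the same spectral gap. Thus it suffices to prove the equivalence for, say, the down-up walk on $\binom{X}{s}$ with stationary distribution $\mu^{(s)}$; write $P$ for its transition matrix and $\mathcal{E}(f,f) := \langle f,(I-P)f\rangle_{\mu^{(s)}}$ for the associated Dirichlet form.

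The core step is then an identity linking $\mathcal{E}$ to the variance drop under the down-projection. Consider the joint law of $(S, R)$ obtained by drawing $S \sim \mu^{(s)}$ and then $R$ uniformly from the $r$-subsets of $S$; a direct binomial calculation verifies that the marginal of $R$ is exactly $\mu^{(r)}$ and that the conditional law of $S$ given $R$ is the kernel $U_{r,s}(R, \cdot)$. Drawing an independent second copy $S'$ from $U_{r,s}(R, \cdot)$, the pair $(S, S')$ is one stationary step of the down-up walk, so the tower property gives
\[
\langle f^{(s)}, P f^{(s)}\rangle_{\mu^{(s)}} = \mathbb{E}\!\left[f^{(s)}(S)\, f^{(s)}(S')\right] = \mathbb{E}_{R \sim \mu^{(r)}}\!\left[(U_{r,s} f^{(s)})(R)^2\right] = \mathbb{E}_{\mu^{(r)}}\!\left[(f^{(r)})^2\right].
\]
Combined with $\mathbb{E}_{\mu^{(r)}}[f^{(r)}] = \mathbb{E}_{\mu^{(s)}}[f^{(s)}]$ (the same tower identity applied to the mean), this yields the decomposition
\[
\mathcal{E}(f^{(s)}, f^{(s)}) = \var_{\mu^{(s)}}(f^{(s)}) - \var_{\mu^{(r)}}(f^{(r)}),
\]
which is nothing more than the law of total variance rewritten as a statement about $\mathcal{E}$.

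Given this identity, the equivalence is immediate: the down-up walk has spectral gap at least $\kappa$ if and only if the Poincar\'e inequality $\mathcal{E}(f^{(s)}, f^{(s)}) \ge \kappa\, \var_{\mu^{(s)}}(f^{(s)})$ holds for every $f^{(s)}$, and by the identity above this is in turn equivalent to $\var_{\mu^{(r)}}(f^{(r)}) \le (1-\kappa)\,\var_{\mu^{(s)}}(f^{(s)})$ holding for every $f^{(s)}$, which is exactly order-$(r, s)$ global variance contraction. I do not foresee any serious obstacle; each step is a direct unpacking of definitions together with reversibility and the tower property. The only subtlety is the first-paragraph reduction, which is needed to ensure that the spectral-gap condition for the up-down walk --- which a priori quantifies over all functions on $\binom{X}{r}$, not just push-forwards $f^{(r)}$ of functions from $\binom{X}{s}$ --- is in fact no stronger than the corresponding condition for the down-up walk.
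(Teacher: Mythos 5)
Your proof is correct and gives the standard derivation that the paper implicitly relies on by citing \cite[Fact~A.6]{CLV20}; the paper itself includes no written proof of this statement. The reduction via the $AB/BA$ same-nonzero-spectrum fact, the identity $\langle f^{(s)}, D_{s,r}U_{r,s}f^{(s)}\rangle_{\mu^{(s)}} = \snorm{U_{r,s}f^{(s)}}_{\mu^{(r)}}^2$, and the resulting law-of-total-variance rewriting of the Dirichlet form are all exactly the right ingredients. One small point worth making explicit, since the paper defines ``spectral gap'' as the absolute gap $1-\max\{|\lambda_2|,|\lambda_{\min}|\}$: your final equivalence identifies the spectral gap with the optimal Poincar\'e constant $1-\lambda_2$, which is legitimate here only because both the down-up and up-down walks are positive semidefinite (as $D_{s,r}$ and $U_{r,s}$ are mutual adjoints under the relevant inner products, so $\langle f, D_{s,r}U_{r,s}f\rangle_{\mu^{(s)}} = \snorm{U_{r,s}f}^2_{\mu^{(r)}} \geq 0$ and similarly for $U_{r,s}D_{s,r}$). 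This is immediate from your own adjointness computation, so it is an omission of a line rather than a gap, but stating it closes the argument.
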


To bound the rate of global variance contraction, we will introduce the notions of local variance contraction and local spectral expansion, the latter of which is controlled by spectral independence. As before, for a distribution $\mu$ on $\binom{X}{n}$ and $\tau \in \textrm{supp}(\mu^{(k)})$, we denote by $\mu_\tau$ the distribution on the $(n-k)$-subsets $S \subseteq X$ which are disjoint from $\tau$ such that for any such subset $S$. the probability $\mu_\tau (S)$ is proportional to $\mu(\tau \cup S)$. 

\begin{definition}[Local variance contraction]
For $n \leq |X|$, we say that a distribution $\mu$ on $\binom{X}{n}$ satisfies $(\alpha_0,\dots,\alpha_{n-2})$-local variance contraction if the following holds. For every $k\le n-2$ and for every $\tau \in \textrm{supp}(\mu^{(k)})$, $\mu_\tau$ satisfies order-$(1,2)$ global variance contraction with rate $\alpha_k/(1+\alpha_k)$, i.e.
\[
\var_{(\mu_\tau)^{(1)}}(f^{(1)}) \le \frac{1}{(1+\alpha_k)}\var_{(\mu_\tau)^{(2)}}(f^{(2)}).
\]
\end{definition}

\begin{definition}[Local spectral expansion]
For $n\leq |X|$, we say that a distribution $\mu$ on $\binom{X}{n}$ satisfies $(\zeta_0,\dots,\zeta_{n-2})$-local spectral expansion if the following holds. For every $k\le n-2$ and for every $\tau \in \textrm{supp}(\mu^{(k)})$, the second largest eigenvalue of the non-lazy chain on $X\setminus \tau$ induced by the up-down walk on $\mu_\tau^{(1)}$ is at most $\zeta_{k}$.
\end{definition}
Spectral independence implies local spectral expansion. 
\begin{fact}[cf.~{\cite[Theorem~8]{chen2021rapid}}]
\label{fact:si-implies-lse}
For $n\leq |X|$, let $\mu$ be a $(C,\eta)$-spectrally independent distribution on $\binom{X}{n}$. Then $\mu$ satisfies $(\zeta_0,\dots,\zeta_{n-2})$-local spectral expansion with $$\zeta_i = \min\left(\eta, \frac{C}{n-i-1}\right).$$ 
\end{fact}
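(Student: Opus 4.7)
The plan is to explicitly compute the non-lazy up-down chain $M^\tau$ on $X\setminus \tau$ and then compare it to the influence matrix $\Psi_\mu^\tau$ via a natural orthogonal decomposition of the relevant function space. Using the spin-system encoding $X = V \times [q]$ described in \cref{def:local-walk}, suppose $\tau \in \supp(\mu^{(k)})$ corresponds to a partial configuration on $\Lambda \subseteq V$ of size $k$. A direct computation from the definitions of $\mu_\tau^{(1)}$ and $\mu_\tau^{(2)}$ yields
\[
M^\tau((u,i),(v,j)) \;=\; \frac{1}{n-k-1}\,\mu(\sigma_v = j \mid \sigma_u = i,\ \sigma_\Lambda = \tau) \qquad \text{for } u \neq v,
\]
while $M^\tau((u,i),(u,j)) = 0$ for $i \neq j$, since two colors on the same vertex are infeasible under $\mu$.

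Next, I would decompose $L^2(\tilde V_\tau, \mu_\tau^{(1)}) = W_1 \oplus W_0$, where $W_1$ is the subspace of functions depending only on the vertex coordinate and $W_0$ is its orthogonal complement (equivalently, $f \in W_0$ iff $\sum_j \mu(\sigma_v = j \mid \sigma_\Lambda = \tau)f(v,j) = 0$ for every $v \notin \Lambda$). Using the identity
\[
\mu(\sigma_v = j \mid \sigma_u = i, \sigma_\Lambda = \tau) \;=\; \Psi_\mu^\tau((u,i),(v,j)) + \mu(\sigma_v = j \mid \sigma_\Lambda = \tau),
\]
together with the fact that a conditional probability sums to $1$ over its outcome, one checks that both $M^\tau$ and $\Psi_\mu^\tau$ preserve this splitting. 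On $W_1$, $\Psi_\mu^\tau$ vanishes identically (summing over $j$ kills the dependence on the conditioning), and $M^\tau$ reduces to the walk that sends $u$ to a uniformly random vertex of $(V \setminus \Lambda) \setminus \{u\}$, whose non-trivial eigenvalue is $-1/(n-k-1)$. On $W_0$, the marginal-dependent term in the identity above is annihilated upon applying $M^\tau$ to any $f \in W_0$, giving the clean equality $(n-k-1)\,M^\tau|_{W_0} = \Psi_\mu^\tau|_{W_0}$.

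Combining the two pieces, the spectrum of $M^\tau$ consists of $1$, the value $-1/(n-k-1)$, and the values $\lambda/(n-k-1)$ as $\lambda$ ranges over the eigenvalues of $\Psi_\mu^\tau|_{W_0}$. Since $\Psi_\mu^\tau$ has zero diagonal and hence zero trace, $\lambda_{\max}(\Psi_\mu^\tau) \geq 0$, and since $W_0$ is an invariant subspace, $\lambda_{\max}(\Psi_\mu^\tau|_{W_0}) \leq \lambda_{\max}(\Psi_\mu^\tau)$. Therefore the second-largest eigenvalue of $M^\tau$ satisfies
\[
\zeta_k \;\leq\; \frac{\lambda_{\max}(\Psi_\mu^\tau)}{n-k-1} \;\leq\; \frac{\min(C,\,\eta(n-k-1))}{n-k-1} \;=\; \min\!\left(\frac{C}{n-k-1},\,\eta\right),
\]
where the middle inequality is the $(C,\eta)$-spectral independence hypothesis.

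The main technical point to handle carefully is the non-symmetry of $M^\tau$ and $\Psi_\mu^\tau$: one must verify that their spectra are real and that the invariant-subspace decomposition faithfully tracks the second-largest eigenvalue. Both facts are standard consequences of the reversibility of $M^\tau$ with respect to $\mu_\tau^{(1)}$ (the diagonal reweighting $D^{1/2}(\cdot)D^{-1/2}$ symmetrizes $M^\tau$ while preserving its spectrum) together with the parallel observation that the same reweighting also symmetrizes $\Psi_\mu^\tau$, both of which are routinely used in this literature (cf.~\cite{ALO20, chen2021rapid}).
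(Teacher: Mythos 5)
Your proof is correct. The paper cites this fact to \cite{chen2021rapid} without giving its own proof, so there is no in-paper argument to compare against; your derivation is a faithful, self-contained reconstruction of the standard argument from that reference. In particular, the explicit formula
\[
M^\tau((u,i),(v,j)) = \frac{1}{n-k-1}\,\mu(\sigma_v = j \mid \sigma_u = i,\ \sigma_\Lambda = \tau),\qquad u\neq v,
\]
for the non-lazy $1\leftrightarrow 2$ up-down chain is right (the $\binom{n-k}{2}$ normalization in $\mu_\tau^{(2)}$ cancels and the sum over the second coordinate produces the factor $n-k-1$), and your identification $(n-k-1)M^\tau = \Psi_\mu^\tau + B$, with $B$ annihilating the mean-zero subspace $W_0$ and $\Psi_\mu^\tau$ annihilating the vertex-only subspace $W_1$, is exactly the decomposition used in \cite{ALO20,chen2021rapid}. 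The trace-zero observation giving $\lambda_{\max}(\Psi_\mu^\tau)\geq 0$, which lets you discard the spurious $-1/(n-k-1)$ eigenvalue from $W_1$ when taking the maximum, and the closing remark about simultaneous $D^{1/2}(\cdot)D^{-1/2}$-symmetrization of $M^\tau$ and $\Psi_\mu^\tau$ (both follow from reversibility of $\mu$, since $\mu(\sigma_u=i\mid\tau)\Psi_\mu^\tau((u,i),(v,j))$ is symmetric in $(u,i)\leftrightarrow(v,j)$) are the right points to be careful about, and you handle them correctly.
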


Moreover, local variance contraction is equivalent to local spectral expansion, and thus (by \cref{fact:si-implies-lse}), can be deduced from spectral independence.

\begin{fact}[cf.~{\cite[Fact~A.8]{CLV20}}]
\label{fact:lse-lvc}
For $n\leq |X|$, a distribution $\mu$ on $\binom{X}{n}$ satisfies $(\alpha_0,\dots,\alpha_{n-2})$-local variance contraction if and only if $\mu$ satisfies $(\zeta_0,\dots,\zeta_{n-2})$-local spectral expansion with $\zeta_k=\frac{1-\alpha_k}{1+\alpha_k}$. 
\end{fact}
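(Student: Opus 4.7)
The plan is to reduce the stated equivalence to a single-distribution fact. For each $k \leq n-2$ and each $\tau \in \supp(\mu^{(k)})$, both sides of the equivalence are statements about the distribution $\mu_\tau$ alone, and in fact only about its $1$- and $2$-marginals. So it suffices to prove: for any distribution $\nu$ on $\binom{Y}{N}$ with $N\ge 2$, writing $\pi_1 = \nu^{(1)}$, $\pi_2 = \nu^{(2)}$, and $f^{(1)} = U_{1,2} f^{(2)}$, the inequality $\var_{\pi_1}(f^{(1)}) \le \tfrac{1}{1+\alpha}\var_{\pi_2}(f^{(2)})$ holding for all $f^{(2)} \colon \binom{Y}{2}\to\mathbb{R}$ is equivalent to $\lambda_2(Q) \le (1-\alpha)/(1+\alpha)$, where $Q$ is the non-lazy chain on $Y$ induced by the $1\leftrightarrow 2$ up-down walk on $\nu^{(1)}$.

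The first step is an adjointness computation. In the weighted $\ell^2$ inner products $\langle \cdot,\cdot\rangle_{\pi_1}$ and $\langle \cdot,\cdot\rangle_{\pi_2}$, the operators $U := U_{1,2}$ and $D := D_{2,1}$ are adjoints, $U^* = D$; this follows by expanding $\langle Uh, g\rangle_{\pi_1} = \langle h, Dg\rangle_{\pi_2}$ and using $U(v, S) = \pi_2(S)/(2\pi_1(v))$ for $v\in S$. Consequently the up-down kernel on singletons, which as an operator on functions on $Y$ equals $UD = UU^*$, is positive semidefinite with eigenvalues in $[0,1]$, and shares all nonzero eigenvalues with the down-up kernel $DU = U^*U$ on functions on $\binom{Y}{2}$.

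The second step rewrites the variance ratio as a Rayleigh quotient. Because $U$ preserves means (a direct check gives $\mathbb{E}_{\pi_1}[Uf^{(2)}] = \mathbb{E}_{\pi_2}[f^{(2)}]$, using that every $v \in S$ contributes $\pi_2(S)/2$), we may subtract the mean and assume $\mathbb{E}_{\pi_2}[f^{(2)}]=0$. Then $\var_{\pi_1}(f^{(1)}) = \|Uf^{(2)}\|_{\pi_1}^2 = \langle f^{(2)}, DU f^{(2)}\rangle_{\pi_2}$, while $\var_{\pi_2}(f^{(2)}) = \|f^{(2)}\|_{\pi_2}^2$. Maximizing the ratio over $f^{(2)}\perp\mathbf{1}$ is therefore a standard Rayleigh quotient, and equals $\lambda_2(DU) = \lambda_2(UD)$.

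The final step relates $\lambda_2(UD)$ to $\lambda_2(Q)$. A short calculation gives $(UD)(v,v) = \tfrac{1}{2}$ for every $v$ in the support of $\pi_1$: from any $2$-set $S\ni v$ the down-step returns $v$ with probability exactly $1/2$, and summing $U(v,S)\cdot\tfrac{1}{2}$ over $S\ni v$ yields $1/2$. Hence $Q = 2UD - I$ is a genuine reversible stochastic matrix on $Y$ (with respect to $\pi_1$), and $\lambda_i(Q) = 2\lambda_i(UD) - 1$ for every $i$. Chaining the three steps, the variance contraction bound with rate $\alpha/(1+\alpha)$ is equivalent to $\lambda_2(UD) \le 1/(1+\alpha)$, which is equivalent to $\lambda_2(Q) \le 2/(1+\alpha) - 1 = (1-\alpha)/(1+\alpha) = \zeta$. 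The only obstacle is bookkeeping — keeping the matrix-versus-operator conventions and the adjoint identification consistent; once the framework is in place, the equivalence follows from the law of total variance and the Rayleigh principle.
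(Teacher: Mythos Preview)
The paper does not give a proof of this fact; it is stated as a citation to \cite{CLV20} and used as a black box. Your argument is the standard one and is correct: the diagonal of the $1\leftrightarrow 2$ up-down kernel $UD$ is identically $1/2$, so the non-lazy chain is $Q = 2UD - I$, the eigenvalues transform affinely, and the Rayleigh-quotient identity $\sup_{f^{(2)}\perp \mathbf{1}} \var_{\pi_1}(Uf^{(2)})/\var_{\pi_2}(f^{(2)}) = \lambda_2(DU) = \lambda_2(UD)$ gives exactly the claimed equivalence $\lambda_2(Q) = (1-\alpha)/(1+\alpha) \Longleftrightarrow$ order-$(1,2)$ contraction with rate $\alpha/(1+\alpha)$.
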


We are now in a position to prove \cref{lem:l-uniform-factorization}.

\begin{proof}[Proof of \cref{lem:l-uniform-factorization}]
Consider $X$ with $|X|\leq n$ and fix integers $r\leq s \leq n$. Suppose $\mu$ is a $(C,\eta)$-spectrally independent distribution on $\binom{X}{n}$. By combining \cref{fact:lse-lvc} and \cref{fact:si-implies-lse}, we have that $\mu$ satisfies $(\alpha_0,\dots,\alpha_{n-2})$-local variance contraction with 
\[
\alpha_i = \frac{1-\min(\eta,C/(n-i-1))}{1+\min(\eta,C/(n-i-1)}.
\]

We show order-$(r,s)$ global variance contraction with constant $\kappa_{r,s}$ as in the statement of the proposition. Let $(X_1,\dots, X_n) \in X^{n}$ be distributed as a uniformly random permutation of an element in $\binom{X}{n}$ sampled according to $\mu$. Note that for any $k \leq n$ and for any $K \in \binom{X}{k}$,
\[\mb{P}[\{X_1,\dots, X_k\} = K] = \mu^{(k)}(K).\]
Let $f^{(s)}:\binom{X}{s}\to \mathbb{R}$. 
Then, we can decompose $\var_{\mu^{(s)}}(f^{(s)})$ as 
\begin{align*}
&\var_{\mu^{(s)}}(f^{(s)}) \\
&= \mathbb{E}[f^{(s)}(X_1,\dots,X_s)^2]-\mathbb{E}[f^{(s)}(X_1,\dots,X_s)]^2\\
&=\sum_{j=1}^{s} \mathbb{E}\left[\mathbb{E}[f^{(s)}(X_1,\dots,X_s)|X_1,\dots,X_j]^2 - \mathbb{E}[f^{(s)}(X_1,\dots,X_s)|X_1,\dots,X_{j-1}]^2\right]\\
&= \sum_{j=1}^{s}\Delta_j,
\end{align*}
where
 $$\Delta_j = \mathbb{E}\left[\mathbb{E}[f^{(s)}(X_1,\dots,X_s)|X_1,\dots,X_j]^2 - \mathbb{E}[f^{(s)}(X_1,\dots,X_s)|X_1,\dots,X_{j-1}]^2\right].$$
Similarly, for $r\leq s$, we have
\[\var_{\mu^{(r)}}(f^{(s)}) = \sum_{j=1}^{r}\Delta_{j}.\]


Consider the random subset $\tau = \{X_1,\dots,X_{j-2}\}$ and denote by $f_\tau^{(r)}$ the random function induced by $f^{(s)}$ on subsets of size $r$ of $X\setminus \tau$. Concretely, for any subset $R \subseteq X \setminus \tau$ with $|R| = r$, we have
\[f_{\tau}^{(r)}(R) = f^{(r+j-2)}(R \cup \tau).\]
Then,  
\begin{align*}
&\mathbb{E}\left[\mathbb{E}[f^{(s)}(X_1,\dots,X_s)|X_1,\dots,X_j]^2 - \mathbb{E}[f^{(s)}(X_1,\dots,X_s)|X_1,\dots,X_{j-2}]^2\mid X_1,\dots,X_{j-2}\right] \\
&= \mb{E}\left[\var_{\mu_{\tau}^{(2)}}(f_\tau^{(2)})\right],
\end{align*}
and
\begin{align*}
&\mathbb{E}\left[\mathbb{E}[f^{(s)}(X_1,\dots,X_s)|X_1,\dots,X_{j-1}]^2 - \mathbb{E}[f^{(s)}(X_1,\dots,X_s)|X_1,\dots,X_{j-2}]^2\mid X_1,\dots,X_{j-2}\right] \\
&= \mb{E}\left[\var_{\mu_{\tau}^{(1)}}(f_\tau^{(1)})\right].
\end{align*}
For brevity of notation, let $$A_j = \mathbb{E}[f^{(s)}(X_1,\dots,X_s)|X_1,\dots,X_j]^2.$$
By the assumption of local variance contraction, we have for every realisation of $\tau$ that
\[\var_{\mu_{\tau}^{(1)}}(f_\tau^{(1)}) \leq \frac{1}{(1+\alpha_{j-2})} \var_{\mu_{\tau}^{(2)}}(f_\tau^{(2)}).\]
Therefore, from the above identities, we have that
\[
(1+\alpha_{j-2})\mathbb{E}[A_{j-1}-A_{j-2}|X_1,\dots,X_{j-2}] \le \mathbb{E}[A_{j}-A_{j-2}|X_1,\dots,X_{j-2}].
\]
Taking the expectation over $X_1,\dots, X_{j-2}$, we can conclude that
\[
\Delta_{j} \ge \alpha_{j-2} \Delta_{j-1},
\]
i.e.~for all $2\leq j \leq s$,
\[\var_{\mu^{(j)}}(f^{(s)}) \geq (1+\alpha_{j-2})\var_{\mu^{(j-1)}}(f^{(s)}) - \alpha_{j-2}\var_{\mu^{(j-2)}}(f^{(s)}).\]
Using this inductively (cf.~\cite[Proof of Theorem~5.4]{CLV20}), we obtain that 
\begin{align*}
\frac{\var_{\mu^{(s)}}(f^{(s)})}{\var_{\mu^{(r)}}(f^{(s)})} 
&\ge \frac{\sum_{j=1}^{s} \prod_{i=0}^{j-2}\alpha_i}{\sum_{j=1}^{r}\prod_{i=0}^{j-2}\alpha_i}.
\end{align*}
Thus, by definition, we have order-$(r,s)$ global variance contraction with rate 
\[
\kappa_{r,s} = \frac{\sum_{j=r+1}^{s} \prod_{i=0}^{j-2}\alpha_i}{\sum_{j=1}^{s}\prod_{i=0}^{j-2}\alpha_i}.
\]
Let $f \colon \binom{X}{n} \to \mb{R}$. Using the identity
\[\on{Var}_{\mu}(f) - \on{Var}_{\mu^{(n - \ell)}}(f^{(n-\ell)}) = \frac{1}{\binom{n}{\ell}}\sum_{S \in \binom{X}{\ell}}\mu[\on{Var}_S(f)],\]
it follows from above that $\mu$ has $\ell$-uniform block factorization of variance with constant at most
\[\frac{\ell}{n}\cdot \frac{1}{\kappa_{n-\ell, n}}.\]
Since $\alpha_k \ge \max\left(1-\frac{\lceil 2C\rceil}{n-k-1},0\right)$ and since ${\kappa_{n-\ell,n}}$ is monotone increasing in each $\alpha_k$, it follows that for $\ell = \lceil \theta n \rceil$ with $\theta n \geq 4\cdot \lceil 2 C\rceil$, we have 
\begin{align*}
   {\kappa_{n-\ell,n}} = \frac{\sum_{k=n-\ell+1}^{n}\alpha_0\dots\alpha_{k-2}}{\sum_{k=1}^{n}\alpha_0\dots\alpha_{k-2}} 
    &\ge \frac{\sum_{k=n-\ell+1}^{n}(n-k)\dots(n-k + 1 -\lceil2C\rceil)}{\sum_{k=1}^{n}(n-k)\dots(n-k+1-\lceil2C\rceil)} \\
    &\ge \frac{(\theta n/2)\cdot (\theta n/2-1)\dots(\theta n/2-\lceil2C\rceil)}{n\cdot (n-1)\dots(n-\lceil 2C\rceil)}\\
    &\ge \frac{\theta}{2}\cdot (\theta/2)^{\lceil 2C\rceil},
\end{align*}
which gives the desired assertion.
\end{proof}
\bibliographystyle{alpha}
\bibliography{main.bib}
\end{document}